\title{ Response-Time-Optimised Service Deployment: \\
\large MILP Formulations of Piece-wise Linear Functions \\
Approximating Bivariate Mixed-integer Functions
}
\date{}
\author{Matthias~Keller and~Holger~Karl%
\IEEEcompsocitemizethanks{%
\IEEEcompsocthanksitem Manuscript received January xx,yy; revised January xx,yy. 
This work was partially supported by the German Research Foundation (DFG) within the Collaborative Research Centre ``On-The-Fly Computing'' (SFB 901).
\IEEEcompsocthanksitem The authors are with the University
of Paderborn, Warburger Str. 100, 33098 Paderborn, Germany.
E-mail: mkeller@upb.de, holger.karl@upb.de
}}
\begin{document}

\maketitle

\begin{abstract}
A current trend in networking and cloud computing is to provide compute resources at widely distributed sites; this is exemplified by developments such as Network Function Virtualisation.
This paves the way for wide-area service deployments with improved service quality: \eg user-perceived response times can be reduced by offering services at nearby sites.
But always assigning users to the nearest site can be a bad decision if this site is already highly utilised.
This paper formalises two related decisions of \emph{allocating} compute resources at different sites and \emph{assigning} users to them with the goal of minimising the response times while the total number of resources to be allocated is limited -- a non-linear capacitated Facility Location Problem with integrated queuing systems.
To efficiently handle its non-linearity, we introduce five linear problem linearisations 
and adapt the currently best heuristic for a similar scenario to our scenario.
All six approaches are compared in experiments for solution quality and solving time.
Surprisingly, our best optimisation formulation outperforms the heuristic in both time and quality.
Additionally, we evaluate the influence of distributions of available compute resources in the network on the response time: The time was halved for some configurations.
The presented formulation techniques for our problem linearisations are applicable to a broader optimisation domain.
\end{abstract}
\begin{IEEEkeywords}
cloud computing; next generation networking, virtual network function; network function virtualisation; resource management; placement; facility location; queueing model; piecewise linear function; problem linearisation; optimisation; mixed integer function; derivative; erlang delay formula; 
\end{IEEEkeywords}

\IEEEpeerreviewmaketitle

\section{Introduction}
\label{sec:introduction}

\ifextended{\subsection{Opportunities and Problem Statement\label{sec:problem-statement}}}

Providing resources at widely distributed sites is a new trend in networking and cloud computing. It is known under different labels, for example, Carrier Clouds~\cite{Taleb2014,Bagaa2014,Cai2013}, Distributed Cloud Computing~\cite{Endo2011b,Agarwal2010,Pandey2010}, or In-Network Clouds~\cite{Stolyar2013a,Scharf2012,Hao2009}. 
These In-Network Clouds are often geared towards specific network services, \eg, firewalls or load balancers -- a development popularised as Network Function Virtualisation~\cite{Fischer2013}. 
Such deployments have an important advantage: Their compute resources are closer to end users than those of conventional clouds, have hence smaller latency between user and resource, and are therefore suitable for highly interactive services.
Examples for such services are streaming applications~\cite{Wan2010, Barker2010}, user-customised streaming~\cite{Bauer2011, Ishii2011}, or cloud gaming~\cite{Lee2012}.  
For such applications, the crucial quality metric is the user-perceived \emph{response time} to a request. 
Large response times impede usability, increase user frustration~\cite{Claypool2006}, or prevent commercial success.

As detailed in a prior publication~\cite{Keller2014}, these response times comprise three parts: 
A request's \emph{round trip time}~(RTT) in the network, the actual \emph{processing time}~(PT) of a request, and its \emph{queuing delay}~(QD) when it has to wait for free compute resources (\Cref{fig:times_all}).
A simple attempt to provide small response times would be to allocate some resources at many sites so that each user can access the services at a nearby resource\footnotemark.
\footnotetext{\label{fn:resource}Our paper is part of broader research where services are deployed in Virtual Machines~(VM) on geographically distributed sites, each managed by an IaaS-software like OpenStack. 
The work presented by this paper also applies to scenarios  where  a physical host exclusively runs a service.
To emphasis this generality, we  abstractly talk about ``allocating compute resources'' instead of ``launching virtual machines'', without bias for either implementation approach. 
To simplify explanation, we assume allocating homogeneous resources at one side e.g. a slice of \SI{2}{CPUs}, \SI{4}{\giga\byte}~RAM.
Heterogeneous resources can be modelled via simple model transformation (\Cref{sec:model}).
}
This, however, is typically infeasible as each used resource incurs additional costs, e.g.,  consumes energy or has leasing fees.
We hence have to decide
a)~where user requests shall be processed -- the \emph{assignment} decision -- and 
b)~how many resources shall process the requests at each site -- the \emph{allocation} decision.
Both decisions are mutually dependent as exemplified in the next section; the resulting problem is called the \emph{user assignment and server allocation problem}.

\begin{wrapfigure}{tr}{0.31\columnwidth}
  \vspace{-3mm}
  \centering \includegraphics[width=0.30\columnwidth,natwidth=58pt,natheight=70pt]{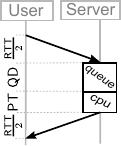}
  \vspace{-1mm}
  \caption{RT = $\quad $ \mbox{RTT + QD + PT}.\label{fig:times_all} }
  \vspace{-3mm}
\end{wrapfigure}
Assignments and allocations influence the queuing delay in two ways:
Allocation modifies the number of allocated resources~$y$ at a site; assignment changes the utilisation of allocated resources at a site (for a fixed~$y$).

Once the number of allocated resources at a site is insufficient to avoid queuing delays for the load assigned to this site, queues will build up and the service quality perceived by the user will suffer. This danger is especially high for compute-intensive services, where even small load can lead to considerable queuing delays. 
Only little research (\Cref{sec:rw}) on server allocation has yet taken these queuing delays into account.

This paper investigates deploying computation-intensive \emph{services} at \emph{compute resources distributed} throughout the net. 
Which distributed compute resource is appropriate depends on a user request's source site (and their request rate).
The service deployment objective is to minimise the response time while using exactly $p$~compute resources.\footnote{We could also limit to \emph{at most} $p$ resources, but since more resources always improve service quality, we use \emph{exactly} $p$ resources.}

When ignoring the queuing delay, assigning user requests to resources allocated  nearby  reduces the round trip time. This problem is hard when exactly $p$ resources are available. 
When ignoring the response time, on the other hand,  the queuing delay is minimised when all requests are assigned to a single site  to which \emph{all} $p$~resources are allocated~\cite{Borst2004}.
Hence, the queuing delay and round trip time are minimised by two conflicting allocation schemes.
To find an optimal solution \emph{for the sum} of queueing delay and round trip time, the two allocation schemes have to be balanced to find assignments and allocations that minimise the response time.

\begin{inextended}
Similarly, the assignment should prefer faster resources, even if their round trip time is a \emph{bit} longer: 
Using faster resources reduces the processing time. 
This time reduction can compensate for the longer round trip time, in total reducing the response time.
\end{inextended}

\begin{notinextended}
Details on how queuing delay affects response time are exemplified in an extended version of this paper~\cite{Keller2015b}. 
It also elaborates further the mutual dependency between assignment and allocation decision.
\end{notinextended}
\begin{inextended}
\subsection{Queuing Delay Effects}
\label{sec:queu-delay-effects}
\begin{table*}[tbp]
\caption{Compares response times considering (\optQP) and ignoring (\optP) QDs for different arrival rates.\label{tbl:example1}}
\centering
\bgroup
\small
\begin{tabular}{S@{\;} S | S@{\;} S@{\;} S S@{\;} S@{\;} S S| S S@{\;} S@{\;} S S@{\;} S@{\;} S} 
  \toprule
  \multicolumn{2}{c|}{arrival rate $\left[ \si{\req\per\second} \right]$} & \multicolumn{3}{c}{\#\,resources} & 
  \multicolumn{3}{c}{time in system $\left[ \si{\milli\second} \right]$} &  { $\varnothing\text{RT}_\text{QP}$} & 
  { $\varnothing\text{RT}_\text{P}$} & \multicolumn{3}{c}{\#\,resources} & 
  \multicolumn{3}{c}{time in system $\left[ \si{\milli\second} \right]$} \\
   {$\lambda_a$} & {$\lambda_b$} & { $f_1$} & { $f_2$} & { $f_3$} & { $f_1$} & { $f_2$} & { $f_3$} & { $\left[ \si{\milli\second} \right]$ } & { $\left[ \si{\milli\second} \right]$} & { $f_1$} & { $f_2$} & { $f_3$} & { $f_1$} & { $f_2$} & { $f_3$} \\ 
  \midrule 
  20 & 10 & 3 & 0 & 2 & 11.1 & 0.0 & 5.6 & 56.7* & 62.2 & 1 & 1 & 3 & 16.7 & 0.0 & 5.6 \\ 
  40 & 30 & 3 & 0 & 2 & 9.7 & 0.0 & 7.6 & 57.3* & 58.3 & 2 & 1 & 2 & 10.7 & 0.0 & 7.6 \\ 
  60 & 50 & 3 & 0 & 2 & 9.5 & 0.0 & 9.2 & 58.7* & 61.3 & 2 & 1 & 2 & 12.1 & 0.0 & 9.2 \\ 
  80 & 70 & 3 & 0 & 2 & 9.9 & 0.0 & 11.8 & 61.6* & 64.3 & 2 & 0 & 3 & 16.0 & 0.0 & 8.3 \\ 
  100 &  90 & 3 & 0 & 2 & 10.7 & 0.0 & 18.0 & 68.8 & 68.8 & 3 & 0 & 2 & 10.7 & 0.0 & 18.0 \\ 
  120 & 110 & 3 & 0 & 2 & 15.8 & 0.0 & 22.0 & 79.3* & 102.5 & 3 & 0 & 2 & 12.6 & 0.0 & 49.9 \\ 
  140 & 130 & 5 & 0 & 0 & 42.1 & 0.0 & 0.0 & 96.5* & 248.9 & 3 & 0 & 2 & 24.2 & 0.0 & 183.3 \\ 
  160 & 150 & 0 & 2 & 3 & 0.0 & 7.9 & 18.4 & 97.6* & 379.1 & 2 & 1 & 2 & 159.7 & 5.3 & 159.7 \\ 
  180 & 170 & 3 & 2 & 0 & 18.9 & 14.5 & 0.0 & 107.2* & 405.9 & 3 & 1 & 1 & 143.1 & 63.1 & 140.0 \\ 
  200 & 190 & 0 & 3 & 2 & 0.0 & 13.4 & 13.0 & 111.3* & 223.3 & 3 & 2 & 0 & 128.5 & 22.0 & 0.0 \\ 
  220 & 210 & 0 & 4 & 1 & 0.0 & 12.6 & 10.6 & 116.3* & 216.4 & 2 & 3 & 0 & 115.1 & 17.7 & 0.0 \\ 
  240 & 230 & 0 & 5 & 0 & 0.0 & 12.4 & 0.0 & 112.4* & 291.3 & 2 & 3 & 0 & 105.3 & 101.0 & 0.0 \\ 
  260 & 250 & 0 & 5 & 0 & 0.0 & 15.6 & 0.0 & 115.6* & 223.3 & 1 & 4 & 0 & 96.1 & 34.1 & 0.0 \\ 
  280 & 270 & 0 & 5 & 0 & 0.0 & 24.3 & 0.0 & 124.3 & 124.3 & 0 & 5 & 0 & 0.0 & 24.3 & 0.0 \\ 
  \bottomrule 
\end{tabular} 
\egroup
\includegraphics[height=6em]{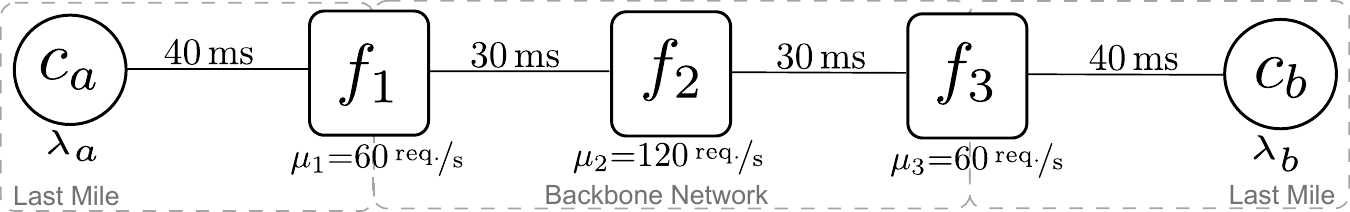}
\end{table*}

Consider graph~$G$ below~\Cref{tbl:example1} as an example:
Two clients $c_a, c_b$ are connected to three facilities $f_{1..3}$ having $10$ resources available, $k_{1..3}{=}10$, such as very large data centres. 
All resources within one facility are homogeneous but they differ among the facilities, having service rates $\mu_{1..3} {=} 60;120;60\,\si{\req\per\second}$ for facility $1;2;3$, respectively.
The assignable requests are limited to $98\%$ of the service rate to avoid very high queuing delays due to its asymptotic behaviour.
This way, the queuing systems are also in steady state.
The number of resources to use are limited to $p{=}5$.
Round trip times between clients and facilities are $l_{a,1..3} {=} l_{b,3..1} {=} 40; 70; 100 \,\si{\req\per\second}$.
Client $c_a$'s arrival rate is slightly larger than $c_b$'s arrival rate; this slight distortion leads to unique solutions.
Each row in \Cref{tbl:example1} corresponds to one level of arrival rates.
In the columns, two solutions are compared which are obtained by either ignoring (\optP) or considering (\optQP) the queuing delays when deciding allocation and assignment.
To compare them directly, for both solutions the following measurements were listed: The average response times $\varnothing \text{RT}$ computed with queuing delays in both cases and the times in queuing systems at each facility.
When comparing the average response times of both solutions in the middle of the table; stars mark the superior solution.
All $\optQP$'s solutions are at least equal to $optP$'s solutions; most are superior.

This superiority has two causes: 
When allocating resources, problem $\optP$ only minimises the round trip time and, thus, assign as many requests as possible nearby. 
Consequently, if all resources are used at the nearest facility, additional resources are allocated somewhere.
In contrast, in our example $\optQP$ allocates additional resources where the queuing delay is reduced the most; cf. $t_{1..3}$ for $\lambda_a {\leq} 60$.
Similarly, the small resource capacity $\mu_1$ forces \optP to assign more assignments to $f_2$ when increasing $\lambda_a {\geq} 200$, $\mu_1 {<} \mu_2$.
In contrast, $\optQP$ assignments are shifted at lower arrival rate to $f_2$ \emph{as the queuing time reduction compensates for the higher round trip time}.
Even with the same allocations $y_{1..3}$ in $\lambda_a{=}120$, $\optQP$ balances the assignments better than $\optP$ 
which results in shorter queueing delays $t_{1..3}$. 

\Cref{tbl:example1} supports another observation: 
The response time differences between \optP and \optQP are small at low or very high system utilisation; \emph{system utilisation} is the fraction of the total arrival rate to the total capacity, $\nicefrac{\sum_f \lambda_f}{\sum_f k_f \mu_f}$.
A high system utilisation enforces one solution where all resources are fully utilised~\eqref{eq:capcomp}.
At low system utilisation, the gain of additionally considering the queuing delay is small, $\lambda_a {\leq} 100$.
However, the tide is turning for medium utilisation, where \optQP's response times can be significantly smaller than \optP's response time, \eg, for $\lambda_a{=} 180$ the $\varnothing\text{RT}_{\optQP}$ is $26.41\%$ of $\varnothing\text{RT}_{\optP}$.%
\begin{IEEEeqnarray}{l}
\underbrace{0.98\,p\,\mu_1 {=} 294}_{\text{cap }f_1} < \underbrace{550}_{\lambda_a{+}\lambda_b} < \underbrace{588 {=} 0.98\,p\,\mu_2}_{\text{cap }f_2}  \label{eq:capcomp} \qquad
\end{IEEEeqnarray}

Assignment and allocation are mutually dependent; making a simple separation into subproblems and solving them sequential is suboptimal.
When assigning demand first, one facility~$f$ could exist whose amount of assigned requests requires allocating more resources as globally allowed (limit $p$); as a simple example imagine $10$~users are assigned to $10$~nearby facilities but only $p{=}8$ resources are allowed to be used. 
They only way to fix this problem is to adjust the assignment to the need of the allocation.
Allocating resources first without assigning requests beforehand is always feasible.
But without knowing and assigning the user requests at the same time, the round trip times between user and facilities with resources are most likely longer than otherwise.
Either way, the solutions are better when assignment and allocation are obtained together.
This in turn significantly increases the complexity of solving the problem to optimality.
\end{inextended}

This paper casts this problem as a queue-extended $p$-median Facility Location Problem~(\Cref{sec:problem}).
To efficiently solve this non-linear optimization problem, five problem linearisations present different modelling approaches, each with different trade-offs between accuracy and search space size~(\Cref{sec:lin}).
Additionally, a well-known heuristic for a similar problem is adjusted to our problem~(\Cref{sec:heuristic}).
\begin{inextended}
Finally, three aspects are evaluated:
First, different basepoint sets for the function linearisations are discussed to reduce the linearisation error and the number of basepoints -- which are two conflicting goals~(\Cref{sec:eval:bps}).
Second, the five linearisations and the heuristic are compared for two metrics, accuracy and solving time~(\Cref{sec:appro-compar}).
Third, an additional evaluation discusses how resource distribution in the network reduces response times~(\Cref{sec:scenario-varia}).
For the last two evaluations, \num{111600} problems are solved.
\end{inextended}
\begin{notinextended}
These five linearisations and the heuristic are compared for two metrics, accuracy and solving time~(\Cref{sec:appro-compar}).
An additional evaluation discusses how resource distribution in the network reduces response times~(\Cref{sec:scenario-varia}).
For the last two evaluations, \num{111600} problems are solved.
In addition, the extended version of this paper~\cite{Keller2015b} discusses linearising a function with high 
accuracy while using few basepoints.
\end{notinextended}

In previous work~\cite{Keller2014}, we had discussed aissgning requests to sites which provided a single compute resource (e.g. a single, fast server). 
The formulated problem was a convex, capacitated queue-extended $p$-median Facility Location Problem.
The present paper extends this work by considering the  compute resources at each site as individual resources  -- now, we additionally decide the resource allocation at each site (e.g., many individual, slower blade servers).
The resulting problem becomes mixed-integer and non-linear,  results in an  M/M/k queuing model instead of an M/M/1 model, limits the allocation to $p$~resources instead of $p$~sites, and considers site capacities in number of available resources with each resource having its own processing capacity. 
Neither problem formulations, linearisation approaches, nor evaluation results presented here have been published before.

\begin{notinextended}

\section{Related work}\label{sec:rw}

Similar assignment and allocation problems with integrated queuing systems have been investigated before.  We focus here on how other work relates to ours.  Details on related papers, a structured comparison of related problems, and a description of related geographical load balancing are given in the extended version of this paper~\cite{Keller2015b}.

\end{notinextended}

\begin{inextended}

\section{Related Work}
\label{sec:rw}
Assignment problems of the form described above have been investigated before. 
We structure their comparison along four dimensions relevant to this work:
Their model complexity, 
simplifications reducing the problem's search space,
optimization goals,
and solution approaches.
Additionally, related systems of geographical load balancing are compared.
Finally, related linearisation techniques are reviewed.

\end{inextended}

\begin{notinextended}

\subsection{Assignment and Allocation with Queues}
Queuing systems have been integrated in FLP problems before~\cite{Wang2002, Berman2006, Aboolian2009, Vidyarthi2009, Pasandideh2010, Moghadas2011, Marianov1996, Marianov2002, Drezner2011} with different objectives.
All of them either have a non-linear objective function or non-linear constraints, but no previous work has utilised a non-linear solver.
We linearised a convex problem in our previous work~\cite{Keller2015}. 
Solving this problem by a linear solver compared to solving the original problem by a convex solver obtains solutions magnitudes faster with only a marginal optimality gap in our experiments.
Another approach uses  cutting planes to improve linearisation accuracy~\cite{Vidyarthi2009} but has an unknown optimality gap.

Other authors solved their related problems via greedy heuristics~\cite{Wang2002,Berman2006,Aboolian2009,Pasandideh2010,Drezner2011} or via full enumeration~\cite{Aboolian2009} for small instances, \eg five facilities. 
The present paper contributes to those papers by describing how to obtain near-optimal solutions for larger input.
We also adapt the genetic heuristic of Aboolian\etalcite{Aboolian2009} to our problem as a comparison case. 
The heuristic was chosen among the references mentioned above because it is well justified, explained, and showed good results.
 
Most mentioned work copes with simpler problems than ours.
Usually, a smaller search space is obtained, \eg, by predefining assignments~\cite{Zhang2012, Aboolian2009, Wang2002, Drezner2011, Berman2006} or by replacing the non-linear queuing delay part by a constant upper bound~\cite{Zhang2012, Wang2002}. 
Both simplifications make load balancing between facilities superfluous in the model, but which could further reduce the average queuing delay in practice.
We do not make such simplifications.

\end{notinextended}

\begin{inextended}

\subsection{Model Complexity}

The simplest model considers \emph{only the round trip time}~(RTT) when assigning users to cloud resources. They equate response time with RTT.
Clearly, this is a simplification of reality, yet minimizing this average RTT is equivalent to the well-known capacitated Facility Location Problem (FLP).
If the problem is further restricted to only use $p$ resources, it becomes a $p$-median FLP, which is NP-hard~\cite{Algorithmic1979}.

A step closer to reality is modelling \emph{also the processing time}~(PT) in addition to the RTT. But as long as PT is constant, this still stays a Facility Location Problem of the type described above. This can be easily seen by extending the original network topology by pseudo-links (at the server or user side) that represent these processing times via their latencies; this is a common rewriting technique for graph-based problems (including Facility Location Problems). 

The real challenge occurs when we also \emph{consider the queuing delay}~(QD).
In this case, the additional time cannot be expressed by rewriting the network topology as the QD depends on the assignment decisions: A higher utilization results in a longer wait, possibly trading off against a shorter RTT.

So far, this more general model has been considered only by few works discussed in the remaining of this section, most use simpler assumptions than ours~(\Cref{sec:problem}) rendering the problem easier to solve.
Vidyarthi\etalcite{Vidyarthi2009} allow the same degrees of freedom as we do.
They approximate, similar to us, the non-linear part of the objective function with a piece-wise linear function. 
However, in contrast to our work, they used a cutting plane technique which iteratively refines the piece-wise function as necessary; it remains unclear how large their linearisation error is.
In contrast, our evaluation shows small linearisation errors; and this is achieved by using a simpler technique.

\subsection{Simplifications}
Other authors investigate slightly different scenarios, so that their problem formulations are similar, yet simpler than ours.

Some authors~\cite{Zhang2012, Wang2002} replace the non-linear QD part with a constant upper bound and, consequently, 
the resulting problems become simpler to solve.
But this also hides QD changes as a result of assignment changes.
For instance, in a situation where load balancing would reduce the QD, this reduction is not visible as the QD part is constant.
Consequently, the resulting solution has further potential for optimization -- we exploit this potential.

In another simplification, the assignments are predefined by a rule. 
Some authors~\cite{Zhang2012, Aboolian2009, Wang2002} always assign requests to the nearest cloud resource.
In such a case, the problem reduces to just finding the best resource location and is easier to solve.
The assignments are then predetermined by the rule.
However, balancing the assignments could further reduce the QD but is not considered.
We do not use any predefined assignment rule, so we have the freedom to change assignments in order to further reduce the response times.

Another group of authors~\cite{Drezner2011, Berman2006} uses a parametrized assignment rule called the gravity rule: Weights determine how users are assigned to cloud resources.
These configurable weights are used to continuously solve the same problem with new weights reflecting the resource utilizations of the previous solution.
This approach does not guarantee to converge, so the authors propose a heuristic that attenuates the changes in each iteration, enforcing convergence with an unknown linearisation error.
In contrast, we solve the problem in one step by using all information to find the global optimum.

Liu\etalcite{Liu2015}, Lin\etalcite{Lin2012}, and Goudarzi\etalcite{Goudarzi2013} present a similar Facility Location Problem with convex costs such as queuing delays or resource's energy costs. 
In contrast to our work, they relax the integer allocation decision variable simplifying the problem to the cost of a less accurate solution when rounding up the obtained continuous allocations.
Our goal, in contrast, is to prevent unexpected expenses by introducing an upper bound to the number of used resources. 
Continuously relaxing our problem can cause any location to be allocated a bit and, consequently, any site is used and paid.
While the papers \cite{Liu2015,Lin2012} only consider queuing delays as a cost function, this paper discusses a holistic queuing system integration and additionally considers splitting and joining (assigning) of the arrival process.

\subsection{Optimization Goal}

Existing literature uses queuing delays in FLPs with three optimization goals: classic FLP, min/max FLP, and coverage FLP.

Classic FLPs are problems that minimize the average response time, like our problem (\Cref{sec:model}) or others~\cite{Wang2002,Berman2006,Vidyarthi2009,Pasandideh2010,Drezner2011,Zhang2012}, allowing RT variations for individual users. 

Aboolian\etalcite{Aboolian2009}'s min/max problem minimizes the maximum response time.
Intuitively, such problems improve especially the users' RT with high RTTs to cloud resources. 
However, if only one such user exists with resources being far away, assigning this user will negatively affect the assignments of other users:
Their assignments are now less-restrictedly constrained by a relaxed upper bound and are likely worse than without the first user.
In contrast, classical FLPs do not suffer this way from a worse case user.

Another type of problem is coverage problems; the user assignment's response times is upper bounded~\cite{Marianov1996,Marianov2002,Moghadas2011}. 
Structurally, a coverage problem is a special, simpler case of a min/max problem; 
the first has a predefined bound, which is additionally minimized in the second. 
Intuitively, such problems can be applied in scenarios where service guarantees for a certain maximal response time will be provided and paid.
In contrast, classical FLPs allow minimizing the average response time below the lowest possible response time bound.

\subsection{Solution Approaches}
A couple of heuristics were proposed solving related problems which are variants of the NP-hard capacitated FLP~\cite{Hakimi1979}. 
No work so far used solvers to obtain solutions (for non-relaxed problems) and full enumerations are known for small instances limited to open five facilities~\cite{Aboolian2009}.
A greedy \emph{dropping} heuristic successively removes from the set of candidates that resource which increases the response time by the smallest amount~\cite{Wang2002}. 
Greedy \emph{adding} heuristics successively add resources, which decreases the response time by the largest amount~\cite{Berman2006,Aboolian2009,Drezner2011}.
Another heuristic probabilistically selects set changes of used resources~\cite{Drezner2011} or performs a breath-first-search through ``neighbouring solutions'' where two solutions are neighbours if their sets of used resources differ in one element~\cite{Aboolian2009,Drezner2011}.
Such heuristics can be stocked in local optima and to mitigate this drawback meta-heuristics are used as a superstructure~\cite{Wang2002,Berman2006,Aboolian2009,Pasandideh2010,Drezner2011}.
These meta-heuristics typically refine previously generated initial solutions, which are obtained randomly or by combining existing solutions. 
The hope is that among the found local optima, one solution is very close to the global optimum -- but without any guarantee.
In contrast, we obtain global optima.
This is an important step for heuristic development as only this enables a clear judgement of heuristics' accuracy; their solution's gap to the global optimum.

Others~\cite{Wang2002,Vidyarthi2009} may achieve near optimal solutions by using optimization techniques like branch-and-bound and cutting planes but their solutions have unknown optimality gabs.
In summary, either optima for small input or solutions with unknown optimality gap are obtained.
This motivated our work on finding near-optimal solutions with a numerically very small optimality gap.

Liu\etalcite{Liu2015} and Wendell\etalcite{Wendell2010} present distributed algorithms for their global Geographical Load Balancing problem by decomposing it into separate subproblems solved by all clients. 
These subproblems converge to the optimal solution only if they are executed in several synchronized rounds in which assignment and utilization information are exchanged among all clients.
Both papers state that this distributed algorithms would obtain optimal solution faster than gathering everything to a centralised solver.
However, we believe that each round a communication delay is introduced when sending update information among all clients; they had ignored this delay in their evaluations. 
The resulting total delay over all rounds is likely larger than communicating with a centralised coordinator. 
In addition, our $p$-median Facility Location Problem has a global constraint on the maximal used resources preventing it to be easily separated into subproblems.

We observed that problem instances were solved only exemplary so far~\cite{Liu2015,Wendell2010,Lin2012,Marianov2002,Berman2006,Aboolian2009,Pasandideh2010,Pasandideh2010,Drezner2011}.
Consequently, the average performance of these solution approaches is hard to predict.
We go beyond this by undertaking a statistical performance evaluation.
We randomly vary our input data and verify the statistical relevance of our findings.

\subsection{Geographical Load Balancing}    

A system for Geographical Load Balancing (GLB) comprises two parts: The decision part selects appropriate server, sites, or Virtual Machines for requests of a certain origin -- the previous sections considers them.
This section focuses on the realisation part, which gathers monitoring information and implements selections.
Different middlewares had been proposed~\cite{Wendell2010,Freeman2006,Wong2006,Wong2005} which are shared between applications.
In this way, each application benefits from instances of the other application running at diverse sites by sharing monitoring information such as latency to servers or to customers.
They realise request assignments, \eg, to close-by or low utilised server by either configuring the Domain Name System (DNS) or are explicitly queried ahead a request send.
Slightly different, Cardellini\etalcite{Cardellini2000} propose redirecting requests to different sites to balance the load. Policies ranges from redirecting all, only largest, or only group requests to selecting sites based on round-robin, site utilization, or connection properties.
Our paper focuses on solving the problem and investigates whether the more complex problem with queuing systems is worth the additional efforts and our results can be applied to improve geographical load balancing systems.

\end{inextended}

\subsection{Linearisation} \label{sec:rw-linear}
Function linearisation is a technique to approximate a non-linear function over a finite interval by line segments; \Cref{sec:pwl} provides technical details.
Applying this technique to objective function or constraints, any non-linear optimisation problem can be approximated by a linear problem~\cite{MINLP2012,DAmbrosio2010}; such a transformation is also called problem linearisation.

We consider solving time and solution quality simultaneously, and, hence, need to choose a suitable number of segments. 
Imamoto\etal's algorithm~\cite{Imamoto2008,Imamoto2008a} (improved by us~\cite{Keller2015}) obtains segments' start and end basepoints yielding a high linearisation accuracy for convex, univariate functions. 
We use this improved version   to obtain basepoints.

Rebennack\etalcite{Rebennack2014} linearised multi-variate functions by first decomposing them into separate independent functions and then recombining the linearisation.
This approach is limited to separable functions. 
But our function of interest is Erlang-C-based~(\Cref{sec:model} Eq.~\ref{eq:erlc}), which is not separable.

Vidyarthi\etalcite{Vidyarthi2009} refines the piece-wise linear function iteratively by adding basepoints while solving.
In contrast, we first compute a tight function linearisation which also involves modifying existing basepoints.
Then, these basepoints are integrated into the problem to solve. 
This two-step approach is much simpler to implement and to solve than changing the search space dynamically during solving.

\section{Problem} \label{sec:problem}

\begin{inextended}
\subsection{Scenario}\label{sec:scenario}
A service provider deploys a service whose \emph{response time} is a critical metric for service quality, \eg, an interactive service.
Geographically distributed \emph{users} \emph{request} the service.
In order to reduce the request's round trip time, geographically distributed \emph{resources} are used, \eg, services are deployed in Virtual Machines~(VMs) utilising compute resources of data centres at different sites.
Resource performance (\eg host performance executing the VM\footnotemark) and resource utilisation (\eg VM's CPU utilisation) determine how fast an reply is computed. 
\footnotetext{In order to rely on performance indicators in a virtualised environment, the service need to use the host hardware, \eg one CPU, exclusively. Otherwise, multiple VMs share the same hardware and the used performance indicator no more describes the real performance to expect. 
Then the theoretical response time computed by the optimisation largely deviates from the observable response time.}
The sum of this time and the round trip time is the response time of a request -- the user's wait time after sending a request until receiving its reply -- shorter user wait means better service quality.
\end{inextended}

\subsection{Model}\label{sec:model}

The scenario~(\Cref{sec:introduction}) is cast as a capacitated $p$-median Facility Location Problem~\cite{Drezner2004}.
A bipartite graph~$G$ has two types of nodes: \emph{Clients} ($c {\in }C$) and \emph{facilities} ($f {\in} F$). 
Clients correspond to sites where user requests enter the network.
Facilities represent candidate sites with available compute resources\cref{fn:resource} on which a service can be executed, \eg hosts of data centres.
\Cref{fig:problemvis} (p.~\pageref{fig:problemvis}) shows such a graph.
The round trip time~$l_{cf}$ is the time to send data from $c$ to $f$ and back.

The geographically\footnotemark{} distributed demand is modelled by the request arrival rate~$\lambda_c$ for each client~$c$.
\footnotetext{More precisely, the request arrival and service sites are topologically distributed; the round trip time of a path between two sites only roughly matches its geographically distance. We use ``geographical'' as an intuitive shorthand.}
Each facility~$f$ has $k_f$~resources available\footnotemark{} and each resource can process requests at service rate $\mu_f$ -- the resource capacity.

\footnotetext{For example, assuming one resource corresponds to a fixed-size Virtual Machines occupying two CPU cores, a data centre with 30 physical hosts with 4 cores has 60 resources available.}
A facility consists only of homogeneous resources. 
Heterogeneous resources at one facility~$f$ can easily be approximated by partitioning these resources into homogeneous, disjunct resource sets, each represented by a dedicated facility $f_1$, $f_2$, ...; $\forall c{:}\; l_{cf} {=} l_{cf_1} {=} l_{cf_2}$ \ldots; this introduces a small inaccuracy by modelling multiple queues where one queue would be the exact model.
\Cref{tbl:var} lists all variables.

The expected time for computing an answer is obtained by utilising a queuing system at each facility, with the usual  assumptions: 
The service times are exponentially distributed and independent. 
The request arrivals at each client~$c$ are described by a Poisson process; client requests can be assigned to different facilities.
At one facility, requests arrive from different clients and the resulting arrival process is also a Poisson process, because splitting and joining a Poisson process results in a Poisson process.
Therefore, we have an M/M/y-queuing model at each facility.

Having this model at hand, the probability that an arriving request gets queued is computed by the \mbox{Erlang-C}~formula~\funcEC given in, e.g., Eq.~\eqref{eq:erlc} of \cite{Bolch2005}. 
Each facility has to be in steady state with resource utilisation $\rho{ =} \nicefrac{\lambda}{y\mu} {=} \nicefrac{a}{y} {<} 1$.
Derived from \funcEC, the expected \emph{number of requests in the system}~(NiS) is $\funcN(a,\,y)$~\eqref{eq:nis}.
Similarly, the expected \emph{time a request spends in the system}~(TiS) is $\funcT(\lambda,\,y)$~\eqref{eq:tis}.
\begin{IEEEeqnarray}{lll}
\funcEC(a,\,y) &= \frac{\frac{y{a}^{y}}{y!\,\left( y-a\right) }}
    {\frac{y{a}^{y}}{y!\,\left( y-a\right) }+ \sum_{i=0}^{y-1}\frac{{a}^{i}}{i!}} & 
    \text{(Erlang-C)} \label{eq:erlc} \qquad \\
\funcN(a,\,y)  &=   \frac{a}{(y-a)}\,\funcEC(a,\,y) + a \quad  & 
    \quad \text{(NiS)} \label{eq:nis}\\
\funcTT[_{\mu}] \textbf{}(\lambda,\,y) &= \frac{1}{\lambda} \funcN(\nicefrac{\lambda}{\mu},\,y) {=} 
{\small 
\frac{\funcEC\left( \nicefrac{\lambda}{\mu} ,\, y \right) }{y \mu-\lambda} {+} \frac{1}{\mu}  
}
& \quad \text{(TiS)} \label{eq:tis} 
\end{IEEEeqnarray}
\subsection{Formulation}\label{sec:formul}
Two decisions have to be made: (1) Distributing a client $c$'s request rate $\lambda_c$ to one or facilities $f$, each of which gets a request rate~$x_{cf}$ and (2) allocating $y_f$ resources at $f$.  Both influence the queuing delay non-linearly.  \ifextended{ -- a well-known fact later recapitulated in \Cref{theo:convN2}}{} The mutual dependency between both decisions causes the complexity of our problem.  The formulation \optQP~(\Cref{algo:optqp}) extends the $p$-median Facility Location Problem \optP~\cite{Keller2015} by having additional costs, the queuing delays, at each facility.  Constraint~\eqref{eq:qpk:demand} ensures that each client's demand is assigned.  The assignments to each facility $f$, $\sum_c x_{cf}$, must not exceed $f$'s service rate $y_f\mu_f$ \eqref{eq:qpk:cap}, which is the service rate per resource~$\mu_f$ times the number of allocated resources~$y_f$; this constraint also ensures that $f$'s queue is always in steady state.  The allocated resources~$y_f$ do not exceed the local \eqref{eq:qpk:count} and global \eqref{eq:qpk:limit} limit.
\begin{table}[tb]
\vspace{-2mm}
\caption{Model variables\label{tbl:var}}  %
\bgroup
\renewcommand{\arraystretch}{1.3}
\small 
\begin{tabular}{p{18mm}>{\footnotesize}p{62mm}}
\toprule 
\multicolumn{2}{l}{Input:} \\
$G=(V,E,l,$ $ \lambda, \mu, k)$ & Bipartite Graph with $V=C \cup F$, {$ C \cap F = \varnothing$}  with client nodes $c \in C$ and facility nodes $f \in F$ \\
$\lcf \in \SetR$ & Round trip time between $c$ and $f$  \\ 
$\mu_f \in \SetR[>0]$  & Request service rate as resource capacity at $f$ \\ 
$k_f \in \SetN[>0]$ & Number of servers available at $f$  \\
$\lambda_c \in \SetR$   & Request arrival rate as demand at $c$ \\ 
$\alpha_{i};\, \beta_{i}$ & $i$-th basepoint $g(\alpha_i) {=} \beta_i$ of PWL function~$\tilde g$ \\
$p = \sum_f y_f$ & Limit on maximal resources to allocation \\
\midrule
\multicolumn{2}{l}{Decision variables:} \\
$\xcf \in \SetR$  & Assignment, request rate\\ 
$y_f \in \SetN[\geq 0]$ & Number of allocated resources (active servers) at $f$ \\
$\dot y_{f};\, \dot y_{fj} {\in} \{0,1\}$ & Indicator: $f$ open; with active PWL func. $\funcTa[\mu_f,j]$\\
$z_{fi};\, z_{fji} {\in} [0,1] $ & Weight of $i$-th basepoint of PWL func. $\funcTa[\mu_f,j]$\\
$h^u_{fji};\, h^l_{fji} {\in}$ $\{0,1\}$ & Indicator: $(j,i)$-th triangle at $f$ \\
\bottomrule
\multicolumn{2}{>{\footnotesize}p{82mm}}{
    Notation shorthand: $k$ or $k_*$ refers to the tuple/vector~$(k_1,..,k_{|F|})$; 
    and $x_{c*}$ refers to matrix slice~$(x_{c1},..,x_{c|F|})$.  }\\
\end{tabular} 
\egroup
\end{table}
\begin{optproblem}[tbp]
\caption{$\optQP(G,\, p ,\, \funcT[*])$, the reference\label{algo:optqp}}
\ifextended{}{\small}
\begin{IEEEeqnarray}{llls}
\min_{x,y} \, & \IEEEeqnarraymulticol{3}{l}{%
    \underbrace{\frac {\sum_{cf} x_{cf} l_{cf}} {\sum_c \lambda_c}}_{\text{avg. RTT}} +    
    \underbrace{\frac{\sum_{f} ( \sum_c x_{cf} )\, \funcTT[_{\mu_f}] (\sum_c x_{cf}, y_{f} )}{\sum_c \lambda_c}}_{\text{avg. time in system}}   
    }\quad\label{eq:qpk:obj} \\
\text{s.t.} \,  
&   \sum_f x_{cf} = \lambda_{c} & \forall c \quad
    & (demand) \label{eq:qpk:demand} \\
&	\sum_c x_{cf}  < \mu_f y_f \quad & \forall f 
    & (capacity) \label{eq:qpk:cap} \\
&	y_f \leq k_f & \forall f 
    & (count) \label{eq:qpk:count}\\
&	\sum_{f} y_f = p  &   
    &(limit) \label{eq:qpk:limit} 
\end{IEEEeqnarray}
\vspace{-2mm}
\end{optproblem}
\section{Linearisation}\label{sec:lin}
The problem~\optQP is non-linear and could be solved by a non-linear solver.
In previous work~\cite{Keller2015}, a simpler, yet also non-linear version of this problem was successfully linearised and was solved by a linear solver fast and without substantial quality loss.
Because of these encouraging result, we also follow the linearisation approach in this paper.
\optQP is more complex than our problem from the previous paper:
While both problems decide the request assignments, the problem~\optQP additionally decides the number of allocated resources at each site. 
This turns \funcT from the univariate function~$\funcT(\lambda)$ into the bivariate function~$\funcT(\lambda, y)$.
In addition, the second parameter is integer, making it difficult to apply standard linearisation formulations.

The remaining section describes the linearisation technique in general (\Cref{sec:pwl}) and reformulates \optQP by linearising either several curves separately (\Cref{sec:lin:curve}, \Cref{sec:lin:curvethin}) or together as a surface (\Cref{sec:lin:surfacetriangle}, \Cref{sec:lin:surfacequad}, \Cref{sec:post-processing-surfaces-results}).
\begin{notinextended} 
The extended version of this paper~\cite{Keller2015b} additionally discusses problem simplifications only applicable to convex cost functions.
Additionally, it contains technical details on efficient computation techniques for Erlang-C-type functions.
\end{notinextended}

\begin{inextended}
Then, \Cref{sec:expl-struc-knowl}  further discusses potential problem simplifications.
Finally, \Cref{sec:lin-erlangc} presents technical details on how Erlang-C-base functions are efficiently implemented.
\end{inextended}

\subsection{Piece-wise linear functions}\label{sec:pwl}

\begin{figure}[tbp]
\centering
\begin{minipage}{0.48\columnwidth}
\centering
\includegraphics[width=2cm]{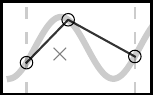}
\caption{1D-PWL example.\label{fig:intro1D}}
\end{minipage}
\hfill
\begin{minipage}{0.48\columnwidth}
\centering
\includegraphics[width=2cm,clip,trim=34pt 22pt 39pt 19pt]{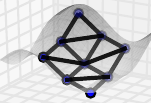} 
\caption{2D-PWL example.\label{fig:intro2d}}
\end{minipage}

\end{figure}

Any non-linear, univariate function~$g$ can be approximated over a finite interval by multiple line segments.
A function composed of such segments is called a piece-wise linear~(PWL) function~\cite{MINLP2012} (strictly speaking, a piece-wise affine function).
We refer to the linearisation of $g$ as $\tilde g$.
As an example, \Cref{fig:intro1D} shows $g(x){=}\sin(x)$ and two segments linearising $g(x)$.
For $m$ basepoints with coordinates $(\alpha_i,\, \beta_i{=}g(\alpha_i))$, $0 {\leq} i {<} m$, a continuous PWL function can be defined~\eqref{eq:pwl1D} by linearly interpolating between two adjacent basepoints (black circles in \Cref{fig:intro1D}).
The interval is implicitly defined by the outer basepoints, $[\alpha_0, \alpha_{m{-}1}]$.

\begin{IEEEeqnarray}{l}
\tilde g(x) = \begin{cases}
(\alpha_1{-}\alpha_0) (x{-}\alpha_0) + \beta_0 & \text{ if } \alpha_0 {\leq} x {<} \alpha_1 \\ 
(\alpha_2{-}\alpha_1) (x{-}\alpha_1) + \beta_1 & \text{ if } \alpha_1 {\leq} x {<} \alpha_2 \\ 
\hspace{3em} \dots \\
\end{cases} \label{eq:pwl1D} 
\end{IEEEeqnarray}

The linearisation accuracy $\epsilon$ is measured by the maximal difference~$\epsilon$ between $g$ and $\tilde g$, $\epsilon {=} \max_x \vert g(x) - \tilde g(x) \vert$. 
Because the PWL function definition in \eqref{eq:pwl1D} is not directly understood by an MILP~solver, it has to be transformed.
We use the convex-combination method~\cite{Dantzig1960} to model continuous, univariate PWL functions: The cases of \eqref{eq:pwl1D} are substituted by a convex combination of the basepoint coordinates~\eqref{eq:pwlconvex} with weights~$z_i$.
\bgroup
\small
\begin{IEEEeqnarray}{ll}
\tilde y {=} \tilde g(x) & \; \Leftrightarrow \; \sum_i z_i \alpha_i {=} x, \, \sum_i z_i \beta_i {=} \tilde y, \, \sum_i z_i {=} 1, \, \sos(z_*) \qquad \label{eq:pwlconvex} 
\end{IEEEeqnarray}
\egroup

In addition, at most two adjacent basepoint weights are allowed to be non-zero, $z_i, z_{i+1} {>} 0$, $z_j {=} 0$, $j {<} i \vee i{+}1 {<} j$.
Most optimisation solvers allow to specify  such restrictions as  special order sets\footnotemark;
with such additional  information a solver's branch and bound methods can explore the search space more efficiently.
\footnotetext{$n$-Special Ordered Sets $\sos[n](V)$ are special constraints limiting decisions variables of a list $V$ so that at most $n$ adjacent variables are positive and non-zero having the remaining variables zero. Notation: $\sos[2](z_*)$ is a shorthand for $\sos[2](z_1,..,z_m)$ (\Cref{tbl:var}).}
With this restriction, coordinates could be expressed that are not located on line segments, \eg, the cross in \Cref{fig:intro1D}.%

Similar to univariate functions, bivariate functions $g(x,y){=}w$ can be linearised by triangles instead of segments~(\Cref{fig:intro2d}).
Then, the convex combination is extended by one parameter~\eqref{eq:pwlconvex2} and weights for at most three vertices of a single triangle are non-zero instead of the two segment vertices in the univariate version~\cite{MINLP2012} (two vertices of a segment correspond to two adjacent basepoints being non-zero).

\begin{samepage}
\bgroup\small
\begin{IEEEeqnarray}{ll}
\tilde w {=} \tilde g(x,y) \; \Leftrightarrow \; & \sum_i z_i \alpha_i {=} x, \, \sum_i z_i \beta_i {=} y, \notag \\
& \sum_i z_i \theta_i {=} \tilde w , \, \sum_i z_i {=} 1, \quad \sos(z_*) \qquad  \label{eq:pwlconvex2}  
\end{IEEEeqnarray}
\egroup\end{samepage}

The particular challenge here is to deal with an objective function having bivariate cost functions with one integer parameter~$y$.
The following two sections present different linearisation strategies for this challenge.

\subsection{Curves}\label{sec:lin:curve}

\begin{figure}[tbp]
\centering
\includegraphics[width=\columnwidth]{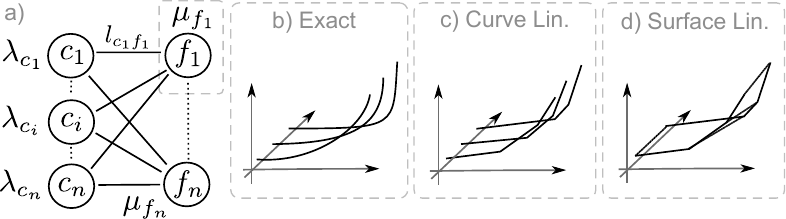}
\caption{Bipartite graph of a Facility Location Problem with queuing systems at each facility~(a) and zoomed bivariate time in system variants: exact~(b), curve-based~(c) and surface-based~(d).}
\label{fig:problemvis}
\end{figure}

Linearising \optQP boils down to linearising the non-linear part $\funcT(\lambda, y)$~\eqref{eq:tis} of the objective function~\eqref{eq:qpk:obj}.
But linearising $\funcT$ is not obvious as its second parameter is integer.
So, $\funcT$ is reformulated as $k_f$ separate univariate functions $\funcT[\mu,j](\lambda){=}\funcT(\lambda,j)\,$, $1{\leq}j{\leq}k_f$ (\Cref{fig:problemvis}b).
We use here the notation~$f_a(x)$ to indicate that $f$ is a function with constants $a$ and variables $x$, \eg $f_{5}(x), f_{7}(x)$ are two different  functions of $x$.
If exactly $y$~resources are allocated at a facility, function $\funcT[\lambda, j]$ computes the corresponding time in system.
Linearising all these functions (\Cref{fig:problemvis}c) also obtains a linearisation of the mixed-integer function $\funcT$~\eqref{eq:tis}.

Each of these functions~$\funcT[\mu,j]$ is convex, so we can use our algorithm~\cite{Keller2015} to obtain corresponding linearisations~$\funcTa[\mu,j]$ with high linearisation accuracy with few basepoints.
Assuming $m$~basepoints are used for one function, one facility needs $m \cdot n$~basepoints, $m\,k_f$.

The problem allows that  facilities have different service rates~$\mu$.
Linearising $\funcT[\mu,j]$ for different $\mu$ needs different basepoints for high linearisation accuracy. 
Consequently, in total $m \sum_f k_f$ basepoints\footnotemark{} exists.
\footnotetext{
Basepoint coordinates can be shared among facilities with same service rates. If all service rates are the same, only $m \, max_f k_f$ different basepoints exists.
}
The coordinates of a single basepoint~$(\alpha_{fji}, \beta_{fji})$ belong to facility~$f$ with $f{\in}F$, $1{\leq}j{\leq}n$, $1 {\leq} i {\leq} m$. They jointly define the approximation function $\funcTa[\mu,j]$.

We rewrite the original problem formulation.  First, we reformulate the integer variable~$y_f$ of decision problem \optQP into a vector of binary decision variables $\dot y_{fj}$, each one representing that facility~$f$ uses $j$~resources if and only if $\dot y_{f j} {=}1$.  Exactly one of $\dot y_{fj}$ is 1 ($\forall f {:}\; \sos[1](\dot y_{f*})\;$\footnotemark{}), which means that $y_f {=} \sum_j j\; \dot y_{fj}$.  Technically, $\dot y_{fj}$ is used to select which time in system function $\funcTa[\mu_f,j]$ for a \emph{specific} number $j$ of allocated reources is used; more formally, $\funcT[\mu_f](\lambda, k) {=} \sum_{j=1}^k \dot y_{fj} \funcT[\mu_f,j](\lambda)$.  \footnotetext{Notation: Shorthand $\dot y_{f*}$ for $\dot y_{f1},..,\dot y_{fn}$ (\Cref{tbl:var})}

Second, continuous weights~$z_{fi}$ are used to express the convex combination of basepoints~(\Cref{sec:pwl}), representing the utilisation and the corresponding TiS at each facility. 
For each facility~$f$, we need $m$~weights since only a single function~$\funcTa[\mu_f,j]$ is selected by the decision variable~$\dot y_{fj}$ (and not $m k_f$~weights, one weight per basepoint at each facility, as one might suspect). 
PWL function~$\funcTa[\mu_f,j](\lambda)$ is then formulated as $\sum_i z_{fi} \beta_{fji}$ with $\lambda_f$ as $\sum_i z_{fi} \alpha_{fji}$.

In the resulting problem formulation $\optQPaiii$\footnotemark{}~(\Cref{algo:optQPaiii}), $\optQP$'s objective function~\eqref{eq:qpk:obj} is transformed as described above, replacing~\eqref{eq:qpk:obj} by~\eqref{eq:qpka:obj} and by~\eqref{eq:qpka:ysos}.
\footnotetext{We use the following naming convention for optimisation problems: ``3'' indicates the cubic nature of this problem; the prefix ``c'' indicates the curve-based linearisation of the time-on-system function, later to be complemented by ``t'' for  triangular formulations and ``q'' for quadrilateral formulations.}
The new constraint~\eqref{eq:qpka:zsos} ensures a convex combination only of neighbouring weights.
The capacity constraint~\eqref{eq:qpka:cap} replaces~\eqref{eq:qpk:cap}.
The local and global limits are adjusted using $\dot y$; constraints~\eqref{eq:qpk:count}, \eqref{eq:qpk:limit} are replaced by \eqref{eq:qpka:count}, \eqref{eq:qpka:limit}.

\begin{optproblem}[tbp]
\caption{$\optQPaiii(G,\,p,\,\funcTa[**])$ \\ separate curves, cubic \label{algo:optQPaiii}}
\ifextended{}{\small}\vspace{-1mm}
\begin{IEEEeqnarray}{llls}
\min_{x,y,z} \, & \IEEEeqnarraymulticol{3}{l}{%
    \frac{\sum_{cf} x_{cf} l_{cf}} {\sum_c \lambda_c} {+} %
     \frac{\sum_{f} (\sum_c x_{cf}) 
         \overbrace{\sum_j \dot y_{fj}\sum_i z_{fi} \beta_{fji}
         }^{\text{time in system }f} } { \sum_c \lambda_c } %
    } \label{eq:qpka:obj} \\
\text{s.t.} \,  
&   \sum_f x_{cf} = \lambda_{c} & \forall c \qquad
    & {\small(demand)} \label{eq:qpka:demand} \\
&	\sum_c x_{cf} {\leq} \sum_j \dot y_{fj} \sum_{i} z_{fi} \alpha_{fji}  \quad & \forall f 
    & {\small(capacity)} \,\, \label{eq:qpka:cap} \\
& 	\sum_{i} z_{fi} = 1,\, \sos( z_{f*} ) & \forall f \,\,
    & {\small(weights)} \label{eq:qpka:zsos}\\
&	\sum_{j} j \, \dot y_{fj} \leq k_f &\forall f
    & {\small(count)} \label{eq:qpka:count}\\
&	\sum_{fj} j \, \dot y_{fj} = p  &   
    &{\small(limit)} \label{eq:qpka:limit}  \\
&	\sos[1]( \dot y_{f*} ) & \forall f 
    &{\small(curve flip)} \label{eq:qpka:ysos}
\end{IEEEeqnarray}
\vspace{-4mm} 
\end{optproblem}

The right term of the objective function~\eqref{eq:qpka:obj} multiplies three decision variables, $x\dot yz$, making the problem cubic. 
To transform it into a linear problem, at first, the factor of \optQPaiii's time-in-system function~\eqref{eq:qpka:obj} is integrated in the basepoint coordinates. 
To understand this modification, \optQP's objective function~\eqref{eq:qpk:obj} has to be revisited.
In this function, the time-in-system function~$\funcT$ has the factor $\sum_c x_{cf}$, in short $u$.
This term of the objective function can be reformulated as $u \funcT(u,y)$. 
As $u$ is also a parameter of $\funcT$, the term also equals $\funcN(\nicefrac{u}{\mu}, y)$ (\funcN defined in \eqref{eq:nis}).
This term has two advantages: 
Without having a factor, the objective function becomes simpler (from~\eqref{eq:qpka:obj} to~\eqref{eq:qpka:obj2}).
In contrast to \funcT, \funcN is independent of $\mu$ and basepoint coordinates need only computed once and not for each different $\mu$. 
{\vspace{-1mm}\small
\begin{IEEEeqnarray}{llls}
\min_{x,y} \, & \IEEEeqnarraymulticol{3}{l}{%
    \frac {1} {\sum_c \lambda_c}%
    \sum_{cf} x_{cf} l_{cf} +%
    \frac {1} {\sum_c \lambda_c}%
    \sum_{f}  \funcN( \frac{\sum_c x_{cf}}{\mu_f},\, y_{f} )%
    }%
    \label{eq:qpk:obj2}
\end{IEEEeqnarray}
}

To linearise $\funcN$, $\funcN( \nicefrac{u}{\mu_f} ,y)$ is split into functions~$\funcN[_j]( \nicefrac{u}{\mu_f} ) {=} \funcN( \nicefrac{u}{\mu_f} ,j)$ separately being linearised (similar to $\funcT$).
The inner function~$\nicefrac{u}{\mu_f}$ of the abscissa is resolved by modifying the abscissa coordinates of the linearisation basepoints.
The resulting convex combination is defined by \eqref{eq:qpka:nis1} (with weights~$z_i$).
{\vspace{-1mm}\small
\begin{IEEEeqnarray}{l}
\funcNa(\nicefrac{u}{\mu}): 
\begin{cases}
\sum_{i} z_{i} \alpha_{i}= \nicefrac{u}{\mu}  \, \Leftrightarrow \,  \mu \sum_{i} z_{i} \alpha_{i} = u \\
\sum_{i} z_{i} \beta_{i} = \funcN( \nicefrac{u}{\mu} )
\end{cases} \hspace*{-2mm} , u {=} \sum_c x_{cf} \quad  \label{eq:qpka:nis1} 
\end{IEEEeqnarray}
}

Replacing \optQPaiii's object function~\eqref{eq:qpka:obj} with \eqref{eq:qpka:obj2} and capacity constraint~\eqref{eq:qpka:cap} with \eqref{eq:qpka:cap2} yields optimisation problem ~\optQPaii (now quadratic, hence a ``2'').
{\vspace{-1mm}\small
\begin{IEEEeqnarray}{llls}
\min_{x,y,z} \, & \IEEEeqnarraymulticol{3}{l}{%
        \frac{\sum_{cf} x_{cf} l_{cf}} {\sum_c \lambda_c} {+} %
         \frac{\sum_{fj} \dot y_{fj}\sum_i z_{fi} \beta_{ji} } { \sum_c \lambda_c } %
        } \label{eq:qpka:obj2} \\
&	\sum_c x_{cf} {\leq} \mu_f \sum_j \dot y_{fj} \sum_{i} \alpha_{ji} z_{fi} \, & \forall f \quad
    &  \small(capacity) \quad \,\, \label{eq:qpka:cap2}        
\end{IEEEeqnarray}
}

In consequence, \optQPaii needs fewer basepoints than \optQPaiii, because $\funcNa_j$ depends only on the number~$m$ of allocated resources~$j$ but is now independent of $\mu$; in total only $m \cdot \max_f k_f$~basepoints are necessary.

The objective function of \optQPaii~\eqref{eq:qpka:obj2} is quadratic ($\dot yz$). 
We further simplified the problem by two modifications to derive a linear problem formulation:
Firstly, more weights~$z$ are used; previously only $m$ weights per facility $f$ are modelled whereas now $mn$ weights are modelled:  $z_{fji}$ for facility $f$,  $\funcNa_j$, basepoint $i$.
Doing so yields an equivalent problem with increased search space.
Secondly, $\dot y$ is turned into a constraint enforcing that only those weights $z_{fji}$ are non-zero which correspond to $\funcNa_j$ being active at facility $f$, $\dot y_{fj}$.
Several test runs showed that the resulting linear problem is solved faster than its quadratic counterpart despite its larger search space.

\begin{optproblem}[tbp]
\caption{$\optQPai(G,\,p,\,\funcNa_*)$\\ separate curves, linear\label{algo:optQPai}}
\ifextended{}{\small}\vspace{-1mm}
\begin{IEEEeqnarray}{llls}
\min_{x,y,z} \, & \IEEEeqnarraymulticol{3}{l}{%
    \frac{\sum_{cf} x_{cf} l_{cf}} {\sum_c \lambda_c} {+} %
     \frac{ 
         \overbrace{\sum_{fji} z_{fji} \beta_{ji}
         }^{\text{total time in systems}} } { \sum_c \lambda_c } %
    } \quad \label{eq:qpka2:obj} \\
\text{s.t.} \,  
&   \sum_f x_{cf} = \lambda_{c} & \forall c \quad
    & {\small(demand)} \label{eq:qpka2:demand} \\
&	\sum_c x_{cf} {\leq} \mu_f \sum_{ji} z_{fji} \alpha_{ji}  \, & \forall f 
    & {\small(capacity)} \hspace{1.5em} \label{eq:qpka2:cap} \\
& 	\sum_{i} z_{fji} = 1,\, \sos( z_{fj*} ) \, & \forall f,j \,\,
    & {\small(weights)} \label{eq:qpka2:zsos}\\
&   \sum_i z_{fji} = \dot y_{fj}  & \forall f,j \quad
    &{\small(sync)} \label{eq:qpka2:sync} \\
&   \IEEEeqnarraymulticol{2}{s}{Constraints \eqref{eq:qpka:count}, \eqref{eq:qpka:limit}, \eqref{eq:qpka:ysos} } \notag
\end{IEEEeqnarray}
\vspace{-4mm}
\end{optproblem}

The resulting linear formulation $\optQPai$~(\Cref{algo:optQPai}) has a new constraint~\eqref{eq:qpka2:sync}, ensuring that only the relevant weights are allowed to be non-zero. 
Having this constraint in place, the new objective function~\eqref{eq:qpka2:obj} equals the old objective function~\eqref{eq:qpka:obj2}.
Constraints~\eqref{eq:qpka2:cap}, \eqref{eq:qpka2:zsos} now support the new $j$ index for weights $z$.
In this way, the new linear problem~\optQPai computes the same solution as the previous quadratic problem~\optQPaii.

\subsection{Thinned Curves}  \label{sec:lin:curvethin}

Problem~\optQPai uses $mn$~basepoints at each facility, $m$ being chosen and $n{=}k_f$.
For example, for $100$~facilities, each with $40$~resources available, \optQPai has $n{=}40$ separate functions $\funcNa_j$ at each facility. 
When using $m{=}10$ basepoints for each PWL function~$\funcNa_j$, the search space contains \num{40000} weight decision variables.
The search space can be reduced by reducing $m$ or $n$, but this lowers accuracy.
Reducing the number of basepoints is a straightforward trade-off of accuracy against search space size, 
but simply reducing the number of resources~$n$ is not adequate as this modifies the problem instance.
Hence, to obtain a similar trade-off for the resources, we need to find a way to reduce the amount of resources to look at.  One option would be to allow gaps in the sequence of number of allocated resources, with appropriate rounding, $J{=}[1,k_f]$.
For example, with $40$ available resources at a facility, we could remove the option to use, say, $26$~resources, forcing the solution to either use $25$ or $27$ resources instead; this turns the interval $J$ into an explicitly enumerated set:  $J{=} \{1,..,25,27,..,k_f\}$.
In the example with $100$ facilities and $m{=}10$ basepoints this removes $1000$ decision variables.
The set~$J$ specifies which options $j{\in}J$ for the number of allocated resources at one facility are available.
\begin{figure}[tb]
\centering
\ifextended{
    \includegraphics[scale=0.9]{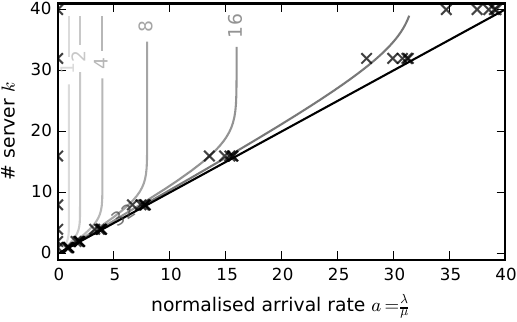} }{
    \includegraphics[scale=0.7]{work/intro_tis_contour_nis} }
\caption{
Contour plot for $\funcN_j(a)$ with basepoints for $j{\in}J{=}\{1,2,4,8,16,32,40\}$ of $\funcNa_J(a,y)$.
\label{fig:contournis}
}

\end{figure}
As an example, \Cref{fig:contournis} shows $\funcN(a,j){=}t$ as contour lines.
Basepoints are plotted as crosses for $\funcNa_k(a)$, $j{\in}J=\{1,2,4,8,16,32,40\}$ with $|J|{=}n{=}7$, $m{=}6$.
With $|F|{=}100$ facilities, only $|F|mn{=}\num{4200}$ weight decision variables are needed as opposed to \num{40000} variables with $n{=}40$ and $m{=}10$.
Each PWL function~$\funcNa_j$ can be seen as a separate curve and thinning the sequence of available functions can be seen as thinning the list of all curves -- these two perceptions gave the name for these formulation techniques.

\begin{optproblem}[tbp]
\caption{$\optQPa(G,\,p,\,\funcNa_{*},\, J)$, thinned curves \label{algo:optQPa}}
\vspace{-1mm}
\begin{IEEEeqnarray}{llls}
\IEEEeqnarraymulticol{4}{s}{%
Obj.$\,$func./Constr.~\eqref{eq:qpka2:obj},\eqref{eq:qpka:ysos},\eqref{eq:qpka2:demand},\eqref{eq:qpka2:cap},\eqref{eq:qpka2:zsos},\eqref{eq:qpka2:sync} }\notag  \\
\text{s.t.} \quad
&	\sum_{j{\in}J} j \; \dot y_{fj} \leq k_f \quad &\forall f \qquad
    & {\small(count)} \label{eq:qpka3:count}\\
&	\sum_{f,\, j{\in}J} j \; \dot y_{fj} \leq p  &   
    &{\small(limit)} \label{eq:qpka3:limit}  
\end{IEEEeqnarray}
\vspace{-2mm}
\end{optproblem}

Problem~$\optQPai$~(\Cref{algo:optQPa}) is adjusted by adding the set  $J$ as parameter and replacing the constraints~\eqref{eq:qpka:count}, \eqref{eq:qpka:limit} by \eqref{eq:qpka3:count}, \eqref{eq:qpka3:limit}.
By dropping a particular $j$ from $J$, two issues can arise: 
a)~In special cases the problem becomes infeasible (as the options to represent a solution are reduced).
b)~The number of allocated resources $y_f {\in} J,\, y_f {=} \sum_{j{\in}J} j \; \dot y_{fj} $ cannot satisfy $\sum_f y_f {=} p$; 
this is the reason why \Coref{eq:qpka3:limit} is relaxed to an upper bound.

\begin{inextended}
To illustrate both issues, consider the following example: 
Resources should be allocated at $6$ facilities, deciding values for $y_{1..6}$. 
The facilities have $k_f{=}30$ resources available and all resources are homogeneous.
Assume that the demand is handled by $p{=}18$ resources 
and that  at least $15$~resources are needed, $\lceil \frac {\sum_c \lambda_c}{\mu} \rceil {=} 15$.
Basepoints exist for $\forall j {\in} J : \funcNa_j(a)$.
Starting with all resource allocations available, $J{=}[1,30]$, the optimal allocation is $y_{1..6} = 3 {\in} J$.
A thinned $J{=}[1,30]$ renders \optQPa infeasible.
Allocating $30$~resources at one site violates the upper limit, $30 {\not =} 18 {=} p$.
Similarly, allocating one resource at each facility violates the limit, $6{\not =}18{=}p$.
Removing fewer $j$ from $J$ reduces the likeliness of this special case.

To illustrate issue b) from above, consider \optQPa  with
$J{=}\{1,10,20,30\}$. Under constraint~\eqref{eq:qpka3:limit}, this problem
is feasible, \eg, with allocation $y_{1..6}{=}10,1,1,1,1,1$ which
satisfies the limit $\sum_f y_f {=} 16 \leq p {=} 18$ and also
satisfies all demand (it would not be feasible if
constraint~\eqref{eq:qpka3:limit} stipulated equality).

\end{inextended}

However, the downside is that not all $p$~resources are used and
queuing delay could be reduced further since 
adding a resource at any facility always reduces that facility's queuing delay.
So adding the remaining resources, $\sum_f y_f {-} p$, will always improve the solution.
The algorithms~\Call{Alloc}{} and \Call{MaxCostDrop}{} increase
resources at those facilities where the queuing delay is reduced
the most. 
They allocate the remaining resources in a way that optimally improves the solution.
The auxiliary algorithm~\Call{Alloc}{} merely transforms many
\optQP variables into a structure suitable for the generic greedy
algorithm~\Call{MaxCostDrop}{}.
Both algorithm are presented here in a more generic form as both are also used later.
\Call{MaxCostDrop}{} considers facilities as buckets ($f$) and allocating resources as placing tokens ($y'_f$) into these buckets; $n{=}\sum_f y_f {-} p$ times.
The queuing delay drop of allocating more resources ($N(a_f, y_f{+}y''_f)$) is described by a decreasing, convex cost function~($c_f(y''_f)$), \Cref{algo:alloc} \Lineref{algo:alloc:costf}.
\Call{MaxCostDrop}{} successively places one token at the bucket with the resulting, largest cost drop at this step.
Distributing tokens to buckets in this way maximises the total cost and the queuing delay drop (\Cref{theo:maxcostdrop}).

\begin{algorithm}[tbp]
\caption{\textsc{Alloc}$(p,\, \lambda',\, y{=}0,\, F'{=}F ) \; \rightarrow \; y, T$ \\ Optimal allocation for known assignment. } \label{algo:alloc}%
\centering
\ifextended{}{\small}
\begin{algorithmic}[1]
\Statex \hspace{-5mm} \textbf{requires} $\lambda'_f {<} k_f\mu_f, \sum_f \lceil \nicefrac{\lambda'_f}{\mu_f}\rceil {\leq} p $
\Statex \hspace{-5mm} \textbf{ensures} $\lambda'_f {<} y_f\mu_f, \sum_f y_f {=} p $
\Statex \hspace{5mm} minimises $\sum_f \funcN( \nicefrac{\lambda'_f}{\mu_f}, y_f )$
\State $\forall f {\in} F': a_f {\gets} \nicefrac{\lambda'_f}{\mu_f}, \; \; y_f^{\text{min}} {\gets} \max \{ y_f, \, \lceil a_f \rceil \} $ \label{algo:alloc:minalloc}
\State $n \gets p{-} \sum_f y_f^{\text{min}}$
\If{$n {>} 0$}
    \State $y' \gets \Call{MaxCostDrop}{n,\, c(j) ,\, y^{\text{max}} }$ \label{algo:alloc:costf}
    \Statex \hspace{5em} $\forall f: c_f(j) := \funcN(a_f, y_f{+}j)$   
    \Statex \hspace{5em} $\forall f: y_f^\text{max} \gets k_f {-} y_f^\text{min}$
    \State $\forall f : y_f =  y_f^\text{min} + y'_f$ \label{algo:alloc:costs}
\Else
    \State $\forall f : y_f =  y_f^\text{min}$
\EndIf
\State \Return $y ,\, \sum_f \funcN(a_f, y_f)$
\end{algorithmic}
\hrule \raggedright
\small $y_f^\text{min}$: minimal necessary allocation at $f$ \\
$y_f^\text{max}$: remaining resources usable by \Call{MaxCostDrop}{} \\
$c_f(j)$: weighted time in system at $f$ with $j$  
\end{algorithm}

\begin{algorithm}[tbp]
\caption{\textsc{MaxCostDrop}$(n ,\, c(j),\, y^{\text{max}}) \; \rightarrow \; y$\\ 
    Drops $n$~tokens in $m$~buckets while minimising total drop costs under bucket capacity constraint.\label{algo:mincostdrop}}
\centering
\ifextended{}{\small}
\begin{algorithmic}[1]
\Statex \hspace{-5mm}  \textbf{requires} $\forall f{:}\; c_f(x)$ is decreasing and convex
\Statex \hspace{-5mm}  \textbf{ensures} $\forall f{:}\; y_f {<} y_f^{max}$, $\sum_f y_f {=} n$,
max $\sum_f c_f(y_f)$
\State $S \gets \left\lbrace y_{fi}  \,\vert\, f{\in}F, 1{\leq}j{\leq}y_f^{\text{max}} \right\rbrace$ \label{line:mincostdrop:matroid}
\Statex \hspace{1.5em} $I \gets \left\lbrace S' {\subseteq} S  \,\vert\, n {\geq} |S'| \right\rbrace$
\Statex \hspace{1.5em} $w(y_{fj}) := c_f(j)-c_f(j{-}1)$ \textbf{if} $y{>}1$ \textbf{else} $c_f(0)$ 
\State $A \gets \emptyset$ \label{line:mincostdrop:greedyb}
\For{$y_{fj} \in S$, sorted by non-increasing weights} \label{line:mincostdrop:greedysort}
    \If{$|A| < n$} $|A| \gets A \cup \lbrace y_{fj} \rbrace$
    \EndIf
\EndFor \label{line:mincostdrop:greedye}
\State $\forall f{:}\; y_f \gets \vert \lbrace y_{fj} {\in} A \,\vert\, 1 {\leq}j{\leq}y_f^{\text{max}} \rbrace \vert$ \label{line:mincostdrop:tox}
\State \Return $y$
\end{algorithmic}
\hrule \raggedright \small 
$y_{fj}$: the $j$-th token placed in $f$; $\;$
$w_{fj}$: the cost for placing $y_{fj}$ 
\end{algorithm}

\begin{theorem}
\label{theo:maxcostdrop}
\Call{MaxCostDrop}{$n ,\, c_*(y),\, y^{max}$}~(\Cref{algo:mincostdrop}) maximises $\sum_f c_f(y_f)$ ensuring $\forall b{:} y_f < y_f^{max}$, $\sum_f y_f {=} n$ for any decreasing and convex cost function $c_f(x)$.
\end{theorem}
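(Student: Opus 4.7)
The plan is to recognise Algorithm~\ref{algo:mincostdrop} as a weighted greedy on a uniform matroid, for which optimality is a textbook result, and to use convexity of each $c_f$ to bridge the gap between the matroid objective and the allocation objective of the theorem. This gives three steps: identify the matroid, use convexity to force the greedy-selected set to be a prefix inside every bucket, and telescope to match the matroid weight with $\sum_f c_f(y_f)$.

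First I would note that the pair $(S,I)$ built on \Lineref{line:mincostdrop:matroid} is exactly the uniform matroid of rank $n$ on $S$, and that \Lineref{line:mincostdrop:greedyb}--\ref{line:mincostdrop:greedye} is Edmonds' greedy on it. It therefore returns an independent set $A$ of maximum total weight $\sum_{y_{fj}\in A} w(y_{fj})$ over all independent sets; since the $w$'s all carry the same sign (inherited from $c_f$ being monotone), the optimum is attained with $|A|=n$, so the cardinality condition $\sum_f y_f=n$ is automatic and no separate argument is needed for it.

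Second, I would use convexity of each $c_f$ to show that this matroid-optimal $A$ is realisable as a valid token allocation. For the recovery on \Lineref{line:mincostdrop:tox} to match the claimed objective, the tokens selected at each bucket must form a prefix $\{y_{f1},\dots,y_{fy_f}\}$. Convexity makes the successive marginal weights $w(y_{f1}),w(y_{f2}),\dots$ monotone in $j$, so the global non-increasing sort visits them in their natural internal order; any admission of $y_{fj}$ into $A$ therefore forces the earlier admission of every $y_{fj'}$ with $j'<j$, because those appear earlier in the sort and the uniform-matroid independence test accepts everything until $|A|=n$. The capacity constraint $y_f\le y_f^{\max}$ is automatic since the ground set $S$ only contains tokens up to $y_f^{\max}$ to begin with.

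With the per-bucket prefix structure established, the matroid weight telescopes to
\[
\sum_{y_{fj}\in A} w(y_{fj}) \;=\; \sum_f \sum_{j=1}^{y_f}\bigl(c_f(j){-}c_f(j{-}1)\bigr) \;=\; \sum_f c_f(y_f)\;-\;\sum_f c_f(0),
\]
so optimising $\sum_{y_{fj}\in A} w(y_{fj})$ over independent sets with $|A|=n$ is, up to the allocation-independent constant $\sum_f c_f(0)$, the same problem as optimising $\sum_f c_f(y_f)$ subject to $\sum_f y_f=n$ and $y_f\le y_f^{\max}$, which together with the matroid-greedy optimality of step one proves the theorem. The main obstacle is step two: pure matroid reasoning only controls total weight, not the arrangement of the selected elements within each bucket, so without convexity the optimal $A$ could scatter across positions in a bucket and the telescoping identity would lose its meaning. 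The whole argument thus hinges on the interplay between the uniform-matroid sort and the per-bucket monotonicity of marginal costs that convexity supplies.
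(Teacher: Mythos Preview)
Your proof takes essentially the same route as the paper's: both recognise the algorithm as the weighted-matroid greedy on the uniform matroid of rank $n$ (the paper cites Cormen's Theorem~16.10 where you invoke Edmonds), use convexity of each $c_f$ to conclude that within every bucket the selected tokens form a prefix $y_{f1},\dots,y_{fy_f}$, and then telescope the marginal weights back to $\sum_f c_f(y_f)$. Your write-up is slightly more explicit about the uniform-matroid identification and the additive constant $\sum_f c_f(0)$ in the telescoping, but the argument is the same.
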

\vspace{-2mm}
\begin{proof}
For a weighted matroid $M{=}(S,I)$, algorithm
\Call{Greedy}{} (from~\cite{Cormen2009}, Theorem~16.10, p.~348ff) computes a subset$A$ with maximal weight.
\Lineref{line:mincostdrop:matroid} defines a weighted matroid $M{=}(S,I)$: $S$ is non-empty; $I$ is hereditary meaning $\forall B{\in}I, A{\subseteq}B: A{\in}B$; $M$ satisfies the exchange property meaning $\forall A,B{\in}I, |A|{<}|B|: x {\in} B{-}A \,\wedge\, A {\cup} \lbrace y \rbrace {\in} I$; $w(y)$ is positive.
The lines \eqref{line:mincostdrop:greedyb}--\eqref{line:mincostdrop:greedye} are Corman's \Call{Greedy}{} casted to our $M$.
By adding $y_{fj}$ to subset $A$, one token is added to bucket~$f$; so the number of tokens~$y_f$ in each bucket can be aggregated as in line~\eqref{line:mincostdrop:tox}. 
In particular, the following property holds: $\forall f,j {:} y_{fj}$ was added before $y_{f(j+1)}$.
This is ensured by the weight sorting (line~\eqref{line:mincostdrop:greedysort}) and $c_f(y)$ being decreasing and convex; then $\forall f,j {:} w_{fj} {>} w_{f(j+1)}$.
From the same property the costs can be derived as $\forall f{:}\; c_f(y_f) {=} \sum_{y_{fj}{\in}A} w(y_{fj})$. 
From Theorem~16.10, the computed subset $A{\subset}S$ maximises $\sum_{y_{fj}{\in}A}w(y_{fj})$ which also maximises $\sum_b c_f(y_f)$.
\end{proof}

\begin{notinextended}
\Call{Greedy}{} places tokens maximising the placement costs and is used by algorithm~\Call{Alloc}{} to allocate $n$~remaining resources to maximally decrease the corresponding response time. The paper's extended version~\cite{Keller2015b} contains a detailed proof.
\end{notinextended}

\begin{inextended}

\begin{lemma}
\label{theo:alloc}
\Call{Alloc}{$G,\, \lambda',\, p$}~(\Cref{algo:alloc}) allocates $p$ resources at facilities $f{\in}F'$ so that $\sum_f \funcN(\nicefrac {\lambda'_f}{\mu_f}, y_f){=}T$ is minimised while ensuring $\lambda'_f {<} y_f\mu_f$ and $\sum_f y_f {=} p$.
\end{lemma}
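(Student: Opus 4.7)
The plan is to verify feasibility of the output of \textsc{Alloc} and then reduce the optimality claim to Theorem~\ref{theo:maxcostdrop}.

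First I would show feasibility. By the \textbf{requires} clause $\sum_f \lceil \lambda'_f / \mu_f \rceil \leq p$ and the definition of $y_f^{\text{min}} {=} \lceil a_f \rceil$ on \Lineref{algo:alloc:minalloc}, we obtain $n {=} p {-} \sum_f y_f^{\text{min}} \geq 0$. When $n {=} 0$ the else-branch directly gives $\sum_f y_f {=} \sum_f y_f^{\text{min}} {=} p$; when $n {>} 0$, Theorem~\ref{theo:maxcostdrop} guarantees $\sum_f y'_f {=} n$, and \Lineref{algo:alloc:costs} adds these contributions to $y^{\text{min}}$, yielding $\sum_f y_f {=} p$. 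The steady-state condition $\lambda'_f {<} y_f \mu_f$ follows since $y_f \geq \lceil a_f \rceil \geq a_f {=} \lambda'_f / \mu_f$, with strictness inherited from $\lambda'_f {<} k_f \mu_f$ in the integer-$a_f$ corner case (there is always at least one extra token available to break the tie).

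Second I would argue optimality by decomposing every feasible solution against the forced minimal allocation. Any feasible $y^*$ must satisfy $y^*_f \geq y_f^{\text{min}}$, otherwise the steady-state condition would fail. Writing $y^*_f {=} y_f^{\text{min}} {+} z_f$ with $z_f \in \{0, \dots, y_f^{\text{max}}\}$ and $\sum_f z_f {=} n$, the objective $\sum_f \funcN(a_f, y^*_f)$ rewrites as $\sum_f c_f(z_f)$ for $c_f(j) {:=} \funcN(a_f, y_f^{\text{min}} {+} j)$. This is exactly the sub-problem that \textsc{MaxCostDrop} is invoked on at \Lineref{algo:alloc:costf}, so Theorem~\ref{theo:maxcostdrop} delivers an optimal $z$ and hence an optimal $y$ for the original problem.

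The hardest part of the argument is verifying the hypotheses of Theorem~\ref{theo:maxcostdrop} for the cost functions $c_f$. Monotonicity is immediate, since adding a server to an M/M/y queue cannot increase the expected number in system. Convexity of $\funcN(a, \cdot)$ in its integer argument is a classical property of the Erlang-C delay formula; I would cite this rather than re-derive it, since the closed-form manipulation is standard but tedious. That convexity statement is really the only non-trivial ingredient of the proof; the rest (feasibility, the decomposition against $y^{\text{min}}$, and handing the residual problem to \textsc{MaxCostDrop}) is routine bookkeeping.
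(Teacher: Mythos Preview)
Your proposal is correct and follows essentially the same route as the paper: establish $n\ge 0$ from the \textbf{requires} clause, dispose of the $n{=}0$ case trivially, and for $n{>}0$ observe that every feasible allocation decomposes as $y_f^{\min}+z_f$ so that the residual problem is exactly the \textsc{MaxCostDrop} instance, with the decreasing/convex hypotheses on $\funcN(a,\cdot)$ supplied by the standard Erlang-C monotonicity and convexity results. Your write-up is in fact more explicit than the paper's about the feasibility check and the decomposition step, and you correctly flag the integer-$a_f$ corner case that the paper glosses over.
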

\vspace{-2mm}
\begin{proof}
Line~\eqref{algo:alloc:minalloc} allocates the resources $y_f^\text{min}$ at least necessary to handle assigned demand $\lambda'_f$ having $n{\geq}0$ resources still to be allocated ($n{<}0$ contradicts $\lambda'_f$'s requirements).
With $n{=}0$ the computed minimal allocation is the only one and, hence, $T$ is minimal; done.
With $n{>}0$ greedy algorithm~\Call{MaxCostDrop}{} is used, where placing tokens into buckets corresponds to increasing resource at facilities.
The token cost function $c_f(j)$ and $N_a(j)$ is decreasing and convex in $j$ and $y^{max}_f$ specifies the capacity of bucket $f$.
Placing $n$~tokens in buckets maximises the total token cost~$\sum_f c_f(y_f)$ (\Cref{theo:maxcostdrop}).
The token cost corresponds to the reduction, $\forall f{:}\; N_a(y_f^{min}) {-} N_a(y_f^{min}{+}y''_f) = c_f(y''_f)$, incurred by allocating $y''_f$ additional resources at $f$.
With minimal allocation ($y^\text{min}$, line~\eqref{algo:alloc:minalloc}) $T$ is maximal. 
Adding resources with maximal reduction of token cost also maximally reduces $T$. 
Consequently, the resulting $T$ is minimal; done.
\end{proof}

\end{inextended}

\subsection{Surface with Triangles}\label{sec:lin:surfacetriangle}

\begin{inextended}
\begin{figure*}[tbp]
\centering
\includegraphics[width=0.8\linewidth]{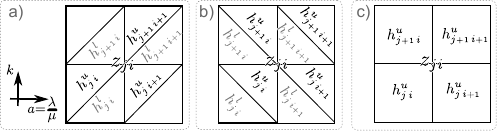}
\caption{
Surface linearisations with MILP formulation relationship; a) \optQPb, b) \optQPbb, and c) \optQPc.\label{fig:surfacemodel}
}
\end{figure*}
\end{inextended}
\begin{notinextended}
\begin{figure}[tbp]
\centering
\includegraphics[width=1.0\columnwidth]{fig/problemtriangle_all}
\caption{
Surface linearisations with MILP formulation relationship; a) \optQPb, b) \optQPbb, and c) \optQPc.\label{fig:surfacemodel}
}
\end{figure}
\end{notinextended}

As discussed in the previous section, dropping separate curves from $J$ jeopardizes feasibility.
To overcome this issue, the previously separate univariate PWL functions are now joined into one bivariate PWL function.
This is done by creating a mesh of triangles over all basepoints of all curves in $J$ (\Cref{sec:pwl}); 
such a mesh defines also the surface of the linearisation (\Cref{fig:problemvis}d).
Doing so, the problem of reduced allocation options (thinned curves) disappears as they are implicitly represented by interpolating between basepoints of neighbouring PWL functions.
Hence, the set of basepoints can be thinned out more aggressively to further reduce the search space without jeopardising feasibility.

This section discusses three questions: 
1)~Using neighouring functions' basepoints for interpolation introduces inaccuracy -- how large is it?
2)~Modelling triangles is more complex than modelling line segments -- will a smaller search space compensate for a more complex optimisation problem?
3)~The convex combination  formulation introduces continuous weights, rendering $y_f$  continuous -- how to treat fractional server allocations?

The linearisation surface $\tilde S$ (the mesh of triangles between basepoints of sequence~$J$) linearly approximates the surface~$S$ of the original function $g$, which is in our case $\funcN(a,y)$; \Cref{fig:contournis} shows its contour for $1{\leq}y{\leq}40$ and corresponding basepoints of an example $J$. 
The triangle mesh (\Cref{fig:surfacemodel}a) has $mn$ basepoints.
As before, the basepoints~$\alpha_{ji}$, $\beta_{ji}$, $\theta_{ji}$ touch the surface at $g(\alpha_{ji}, \beta_{ji}) {=} \theta_{ji}$, $0{\leq} j {<} n$, $0 {\leq} i {<} m$.
Using the convex combination MILP formulation~\eqref{eq:triangle1}, each basepoint has a weight~$z_{ji}$ as its decision variable. Only the three edges of exactly one triangle form the convex combination (\sos[3])~\cite{MINLP2012,DAmbrosio2010}. This is naturally expressed by \sos[3], but these are not supported by current solvers. 
D'Ambrosio\etalcite{DAmbrosio2010} presented an equivalent formulation~\eqref{eq:triangle2} only using \sos[1] constraints:
A new auxiliary, binary decision variable $h$ for each triangle indicates whether the triangle is active, only one $h$ and its triangle weights are allowed to be non-zero while all other weights are forced to zero.
The formulation~\eqref{eq:triangle2} corresponds to triangle enumeration and orientation in~\Cref{fig:surfacemodel}a.
Integrating the triangle formulation from~\eqref{eq:triangle1}, \eqref{eq:triangle2} into \optQPa results in $\optQPb$ (\Cref{algo:optQPb}).
\bgroup \small
\begin{IEEEeqnarray}{l}
\sum_{ji} \alpha_{ji} z_{ji} {=} x,\, \sum_{ji} \beta_{ji} z_{ji} {=} y, \sum_{ji} \theta_{ji} z_{ji} {=}w, w = g(x,y), \notag \\
\sos[3](z_{**}),\,  \sum_{ij} z_{ij} = 1 \label{eq:triangle1} \\
\sum_{ji}{ h^u_{ji} {+} h^l_{ji} } {=} 1, \, \sos[1]( h^{*}_{**} ),\, 
    h^*_{0*} {=} h^{*}_{*0} {=} h^{*}_{m*} {=} h^{*}_{*n} {=} 0  \notag  \\
\forall ji: z_{ji} \leq h^u_{ji} {+} h^l_{ji} {+} h^u_{\underset{j\hspace{3mm}}{i+1}} 
        {+} h^l_{\underset{j+1}{i+1}} {+} h^u_{\underset{j+1}{i+1}} {+} h^l_{\underset{j+1}{i\hspace{3mm}}} \label{eq:triangle2}
\end{IEEEeqnarray}
\egroup

While D'Ambroisio\etal focus on tight approximation using many basepoints, we additionally want to reduce solving time. This depends on the number of used decision variables and basepoints.
So in a nutshell, we aim for large triangles with high linearisation accuracy; the accuracy is, as usual, the maximal difference between the original surface~$S$ and the triangle's surface~$\tilde S$.
This is facilitated by  choosing good basepoints, and the remaining section discusses two further options:  changing triangle orientation (this subsection) and replacing the triangles with quadrilaterals (Section~\ref{sec:lin:surfacequad}).

\begin{optproblem}[tbp]
\caption{$\optQPb(G,\,p,\, \funcNa)$, with triangles\label{algo:optQPb}}
\vspace{-1mm}
\ifextended{}{\small}
\begin{IEEEeqnarray}{llls}
\min_{\begin{smallmatrix}
x,y,\\
z,h
\end{smallmatrix}} \, 
& \IEEEeqnarraymulticol{3}{l}{%
    \frac{1} {\sum_c \lambda_c} \sum_{cf} x_{cf} l_{cf} {+} %
     \underbrace{\frac{1
         } { \sum_c \lambda_c } \sum_{fji} z_{fji} \theta_{ji} }_{\text{sum of all TiS} } %
    } \quad \label{eq:qpkb:obj} \\
\text{s.t.} \,  
&   \sum_f x_{cf} = \lambda_{c} & \forall c \quad
    & {\small(demand)} \label{eq:qpkb:demand} \\
&	\sum_c x_{cf} {\leq} \underbrace{\mu_f \sum_{ji} z_{fji} \alpha_{ji}}_{\text{capacity at }f} \, & \forall f 
    & {\small(capacity)} \hspace{1.5em} \label{eq:qpkb:cap} \\
&	\underbrace{ \sum_{ji} z_{fji} \beta_{ji}}_{\text{\#\,server used at }f} \leq k_f & \forall f
    & {\small(count)} \label{eq:qpkb:count}\\    
&	\sum_{fji} z_{fji} \beta_{ji} = p  &   
    &{\small(limit)} \label{eq:qpkb:limit}  \\
&	\sum_{ji} z_{fji} = y_f & \forall f 
    &{\small(open)} \label{eq:qpkb:open} \\
&   \sum_{j'i'} h^{*}_{fj'i'} = y_f  & \forall f \quad
    &{\small(o-sync)} \label{eq:qpkb:open2} \\
&   \sum_{j'i'}{ h^u_{fj'i'} + h^l_{fj'i'} } = 1, \notag \\
&   \hspace{3em} \sos[1]( h^{*}_{f**} ) & \forall f
    & {\small(single-tri)} \label{eq:qpkb:singletri} \\   
&   h^*_{f0*} {=} h^{*}_{f*0} {=} h^{*}_{fm*} {=} h^{*}_{f*n} {=} 0 \qquad &   \forall f
    & {\small(tri-corner)}\,\, \label{eq:qpkb:tricorner} \\
&   \IEEEeqnarraymulticol{3}{l}{%
    z_{ji} \leq h^u_{fji} + 
    h^l_{fji} + h^u_{f\,j\,i+1} + 
    h^l_{f\,j+1\,i+1}+}\notag\\ 
&   \hspace*{3em}        h^u_{f\,j+1\,i+1} + h^l_{f\,j+1\,i}  & \forall ji
    & {\small(tri-sync)} \label{eq:qpkb:trisync} \\           
&   \text{with } 0{\leq}j'{\leq}n ,\, 0{\leq}i'{\leq}m \notag
\end{IEEEeqnarray}
\end{optproblem}

The resulting, triangle-based optimisation problem has four groups of decision variables: 
a)~The request assignment $x_{cf}$; 
b)~binary variable $y_f$ activates facility~$f$ and considers time in system only for active facilities;
c)~the weights~$z_{fji}$ of the linearised surfaces at each facility $f$;
d)~the auxiliary variables~$h^u_{fji}$, $h^l_{fji}$ for the upper and lower triangles ensure that weights~$z_{fji}$ are 
non-zero if exactly one triangle at each facility is active.

As a variant, the triangle direction can be flipped (\Cref{fig:surfacemodel}b).
The adjusted formulation $\optQPbb(G,\,p,\, \funcNa)$ uses objective function and constraints from \eqref{eq:qpkb:obj}--\eqref{eq:qpkb:tricorner} and replaces constraint~\eqref{eq:qpkb:trisync} by \eqref{eq:qpkb2:trisync}.
{
\small
\begin{IEEEeqnarray}{lls}
 \IEEEeqnarraymulticol{3}{l}{%
        z_{ji} \leq h^u_{fji} + h^l_{f\,j\,i+1} + h^u_{f\,j\,i+1} + h^l_{f\,j+1\,i+1}+} \notag\\ 
   \hspace*{3em}        h^u_{f\,j+1\,i} + h^l_{f\,j+1\,i}\quad  & \forall ji\quad
    & {\small(tri-sync)}\qquad\qquad \label{eq:qpkb2:trisync} 
\end{IEEEeqnarray}
}

Comparing the linearisations of both triangle orientations, the resulting surfaces $\tilde S^{+}, \tilde S^{-}$ are obviously different.
Zooming into two adjacent triangles, \Cref{fig:convcomb4points} shows four example basepoints and the two triangles of each of the two orientations. 
The three visible triangles are distinguished by different shades of grey (the fourth one is hidden in the back).
Two triangles connected by the dashed diagonals either form an upper $\tilde S^{+}$ or lower surface $\tilde S^{-}$, depending on the orientation and on the differences of the basepoint coordinates. 
 
\begin{figure}[tbp]
\centering
\begin{minipage}[b]{0.48\columnwidth}
  \centering
  \includegraphics[width=0.9\columnwidth]{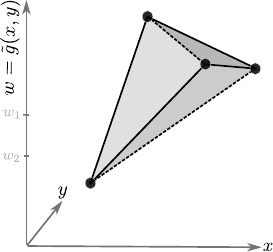}
  \caption{Triangle surface between four basepoints.\label{fig:convcomb4points}}
\end{minipage}
\hfill
\begin{minipage}[b]{0.48\columnwidth}
  \centering
  \includegraphics[width=1.0\columnwidth]{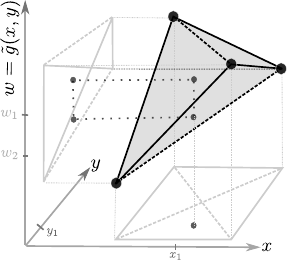}
  \caption{Quadrilateral surface; convex hull of convexly combining four basepoints.\label{fig:convcomb4pointsquad}}
\end{minipage}

\end{figure}

All basepoints lie on the original surface~$S$. 
At any other points, at a given ($x, y$) coordinate, there are two
points $(x, y, w) \in S$ (the original point) and  $(x, y, \tilde w) \in \tilde S$ (its linearisation);
typically, $w \not = \tilde w$, inducing some linearisation
inaccuracy. 
In general, this inaccuracy depends on the triangle orientation, the number of used basepoints, and the basepoint coordinates itself.
\begin{inextended}
Taking this fact into account, \Cref{sec:eval:bps} discusses the
generation of basepoint sets while maximising accuracy.
\end{inextended}
\begin{notinextended}
We discuss  the  choice of basepoints maximising accuracy in the extended version.  
\end{notinextended}

\subsection{Surface with Quadrilaterals}\label{sec:lin:surfacequad}
Triangle-based linearisations can be found in the literature.
We explore here an alternative, namely a linearisation where the basepoints form quadrilaterals rather than triangles. 
The hope is again to reduce the search space.

For linearising a function $g(x,y){=}w$ a single 
triangle can be described by three basepoints and all points within the triangle by convexly combining the basepoints~\eqref{eq:rt:tri}.
Similarly,  the four basepoints of a quadrilateral can be used~\eqref{eq:rt:quad}.
However, while equations~\eqref{eq:rt:tri} form a linear system having
a unique solution, the corresponding
equations~\eqref{eq:rt:quad} form an underdetermined linear system having more unknowns ($z_i$) than equations.
\bgroup
\small
\begin{IEEEeqnarray}{l}
\sum_{i{=}0}^{2}\alpha_i z_i {=} x,\, \sum_{i{=}0}^{2}\beta_i z_i {=} y, \,\sum_{i{=}0}^{2}\theta_i z_i {=} w \quad \label{eq:rt:tri} \\
\sum_{i{=}0}^{3}\lambda_i z_i {=} x,\, \sum_{i{=}0}^{3}\beta_i z_i {=} y,\, \sum_{i{=}0}^{3}\theta_i z_i {=} w \quad \label{eq:rt:quad} 
\end{IEEEeqnarray}
\egroup

This results in the following issue: For given $x,y$-coordinates,
there are infinitely many solutions for $w$-coordinates admissible
under equations~\eqref{eq:rt:quad}.
For a geometric illustration, \Cref{fig:convcomb4pointsquad} shows four basepoints. 
For a fixed $(x_1, y_1)$, the admissible values for $w$ are visualised by a
vertical grey line.  This line intersects with the basepoints' convex
hull at $(x_1, y_1, w_1)$ and $(x_1, y_1, w_2)$, limiting admissible $w$ to
the interval~$[w_1, w_2]$.

In general, this formulation is difficult to integrate into an MILP because
$w$ is not unique.  However, for problems minimising $w$, the optimum
is $w_1$; maximising $w$ results in $w_2$.  For such problems, $w$
values are unique.  
And indeed, the relevant term of our objective function~\eqref{eq:qpkb:obj}  $\sum_{f} \; \sum_{ji}z_{fji}\theta_{ji}$ is to be minimised.
Hence, the quadrilateral formulation is  applicable.

The resulting problem formulation $\optQPc(G,\,p,\,\funcNa)$  has objective function and constraints from \eqref{eq:qpkb:obj}--\eqref{eq:qpkb:tricorner} and replaces constraints~\eqref{eq:qpkb:trisync} by \eqref{eq:qpkb23:trisync}.
\begin{IEEEeqnarray}{lls}
      z_{ji} \leq h_{fji} + h_{f\,j\,i+1} + \notag \\
 \hspace*{3em}   h_{f\,j+1\,i+1} + h_{f\,j+1\,i} & \quad \forall ij
    & \quad {\small(rec-sync)} \label{eq:qpkb23:trisync} 
\end{IEEEeqnarray}
The advantage of this approach is that it needs fewer decision
variables: We need only half the number of quadrilaterals to cover an
area than triangles (\Cref{fig:surfacemodel}), and each triangle or
each quadrilateral needs its own decision variable to control whether
its basepoints contribute to the convex combination. Moreover, the
constraints in the quadrilateral case are simpler than in the triangle
case -- compare \eqref{eq:qpkb:trisync} vs.\
\eqref{eq:qpkb23:trisync}.

\begin{algorithm}[tbp]
\caption{\textsc{Search}$(G,\, \optQPx ,\, p,\, \funcNa) \; \rightarrow\; x,\, y)$ \\
Testing $p' \leq p$ to obtain integer $y$.}
\label{algo:test}
\ifextended{}{\small}
\centering
\begin{algorithmic}[1]
\State $p' \gets p$
\While{$p' \leq p$}
\If{$p'$ is considered the first time} 
\State $ \text{feasible}, x, y \gets \optQPx(G,p',\funcNa) $
    \If{ feasible}\label{line:test:before}
        \State $\forall f: y_f \gets \lceil y_f \rceil, \; \lambda'_f {=} \sum_c x_{cf}$
        \State $\Delta \gets \sum_f y_f - p$ \Comment{valid solution for $\Delta {\leq}0$}
        \If{$\Delta = 0$} \Return $x,\, y$ \Comment{direct hit} 
        \EndIf
        \If{$\Delta < 0$}  \Comment{add remaining resources}
            \State $y \gets \Call{Alloc}{p ,\, \lambda' ,\, y }$ 
            \State \Return $x,\, y$ 
        \EndIf
        \If{$\Delta > 0$} \Comment{too many resources} 
            \State $y \gets \Call{DeAlloc}{p ,\, \lambda' ,\, y }$ 
            \If{$ \Call{DeAlloc}{\cdot}$ was feasible}
                \State \Return $x,\, y$ 
            \Else
                \State decrease $p'$ by $\Delta$ \label{line:test:pdelta}
            \EndIf
        \EndIf
    \Else \Comment{infeasible with current $p'$} 
        \State increase $p'$ by 1
    \EndIf
\Else \Comment{$p'$ was considered in previous loop iterations} 
    \State increase $p'$ by 1
\EndIf 
\EndWhile 
\State \Return \Call{Search}{} is infeasible 
\end{algorithmic}
\end{algorithm}

\subsection{Post-processing Surface Results}\label{sec:post-processing-surfaces-results}

As the final building block, the algorithm
\Call{Search}{}~(\Cref{algo:test}) overcomes the inherent drawback of
all surface linearisations: Because linear interpolation along $J$ is
allowed, allocations obtained by \optQPb, \optQPbb, or \optQPc are
 not necessarily integer values.  However, our resources
are not splittable, \eg, number of VM instances, thus allocations have to be
rounded.  Rounding allocations down could result in overutilised
resources and infeasible solutions.  Hence, fractional allocations
have to be rounded up, potentially resulting in allocating more
resources than allowed, violating $\sum_f y_f {\leq} p$.  
Hence, we cannot just round up all allocations. Could it be
possible to round up only some of these allocations and round down
some others? \Call{Search}{} answers this question.  If this
is not possible, another solution is obtained with smaller limit $p'$.  
This way, the algorithm conceptually tries all
$p' {\in} [p_\downarrow,..,p]$ where $p_\downarrow$ is the smallest
number of resources at which the problem is still feasible.

In detail, \Call{Search}{} is a bit more complicated. While it would
be correct to iterate over the sequence $p, p{-}1, \ldots,
p_\downarrow$, this has unacceptable runtime, in particular owing to
the frequent calls to solve \optQPx. But solutions for any
$p'$ and $p'-1$ are most likely to be very similar anyway: If any $p'$ had 
inappropriate assignments, most likely $p'{-}1$'s solution is similarly
inappropriate; so we could skip $p{-}1$ to save runtime. Hence,
\Call{Search}{} modifies $p'$ in larger steps, trying to find a
suitable solution more quickly. The intuition for the step size is the
number of excess allocations due to rounding (see second-last paragraph in this subsection).

\Call{Search}{} invokes \optQPb, \optQPbb, or \optQPc.  The parameter~\optQPx points to the function solving the concrete problem, \eg \optQPb.  Solving \optQPx results in one of three cases: a)~Exactly $p$ resources are allocated; the algorithm is done.  b)~Fewer resources than requested are allocated ($\Delta{<}0$), then \Call{Alloc}{}~(\Cref{algo:alloc}) distributes the remaining $|\Delta|$ resources; the algorithm is done.  c)~More resources than requested are allocated ($\Delta{>}0$), then \Call{DeAlloc}{} removes $\Delta$ resources; \Call{DeAlloc}{}\ifextended{~(\Cref{algo:dealloc})} is basically a twin of \Call{Alloc}{} removing resources one by one; \ifextended{details on
  \Call{Dealloc}{} are provided later}{this paper's
  extended version provides details on \Call{DeAlloc}{}}.
Often, removing all $\Delta$~resources will not be feasible because the current assignment
distributes the demand improperly among the facilities.
Then, only adjusting the assignments itself can help.
This is done by reinvoking \optQPx with a smaller $p'$.
The resulting solution is processed as the original limit $p$ in one of the three cases.
The basic idea is to reduce $p'$ and find a good assignment for which the allocation can be adjusted to use $p$ resources.
At some point $p'$ is so small 
that \optQPx becomes infeasible.

The algorithm \Call{Search}{} finishes in two cases: 
i)~The allocation matches (a) or could be fixed (b,c). 
ii)~After considering all $p'{\in}[p_\downarrow,..,p]$ without success, the algorithm stops without solution.
Case (ii) occurs if \optQPx with $p_\downarrow{+}1$ has a solution whose over-allocation could not be fixed by \Call{DeAlloc}{} due to the assignments while \optQPx with $p_\downarrow$ is infeasible.
In our evaluations this case occurred more frequently with inputs having a very high system utilisation, $\nicefrac{\sum_f \lambda_f }{\sum_f k_f\mu_f} \geq 0.96$ -- in such cases, the minimal necessary allocation matches the limit.

To realise bigger jumps between values for $p'$, \Call{Search}{} examines $p,..,p_\downarrow$ as follows ($p_\downarrow$ is initially unknown):
It starts with $p$ and jumps down to $p' {\gets} p {-} \Delta$
(\Lineref{line:test:pdelta}), skipping as many potential resource limits as resources were over-allocated. 
The idea is that for large over-allocations, a larger step ($\Delta$) is
necessary than for slight over-allocations to change
the assignment. 
It continues jumping down, $p' {\gets} p' {-} \Delta$, until the new $p'$ is infeasible and then tests increasing $p'$, $p' {\gets} p' {+} 1$.
The next values for $p'$ might also be infeasible and increasing $p'$ is continued until a solution for $p'$ can be obtained, $p{=}p_\downarrow'$.
\Call{Search}{} continues increasing $p'$ and additionally skipping already considered $p'$ until $p'{=}p$ is reached and the search terminates without success.
The first stage where $p'$ jumps down allows to quickly find a small $p'$ for which the assignment is appropriate.
If not, the second stage ensures that all $p'\in [p_\downarrow',..,p]$ are tried.

In our evaluations, only for inputs with very high system utilisation, testing many $p'$ limits results in long solving time. 
The remaining inputs needed only $3$--$7$ iterations and separate solving attempts of \optQPx($p'$).
This raises the question if the runtime of these multiple solving times can still compete with solving times of a single solving attempt for, \eg, \optQPa?
An answer is provided by our evaluation in \Cref{sec:appro-compar}.

\begin{inextended}

\begin{algorithm}[tbp]
\caption{\textsc{DeAlloc}$(p ,\, \lambda' ,\, y {=} k_f) \; \rightarrow \; y, T $ \\
    Optimally reduce over-allocation, specified in vector~$y$.\label{algo:dealloc}} 
\centering
\ifextended{}{\small}
\begin{algorithmic}[1]
\Statex $\quad$ \textbf{requires} $\lambda'_f {<} k_f \mu_f, \sum_f \lceil \nicefrac{\lambda'_f}{\mu_f}\rceil {\leq} p$
\Statex $\quad$ \textbf{ensures} $\lambda'_f {<} y_f\mu_f, \sum_f y_f {=} p$
\Statex \hspace{4.9em} minimises $\sum_f \funcN(\nicefrac {\lambda'_f}{\mu_f}, y_f)$
\State $\forall f {\in} F': a_f {\gets} \nicefrac {\lambda'_f}{\mu_f},\; y_f^\text{min} {\gets} \max \{ y_f, \, \lceil a_f \rceil \} $ \label{algo:dealloc:minalloc}
\State $n \gets \sum_f y_f {-} p$
\If{$n {>} 0$}
    \State $y^\text{sub} \gets \Call{MaxCostDrop}{n,\, c_* ,\, y^{\text{max}} }$ \label{line:dealloc:cost}
    \Statex \hspace{4.5em} $\forall f: c_f(y') := \mathfrak{M} {-} \funcN(a_f, y_f{+}y')$
    \Statex \hspace{5em} {\small with $\forall y{=}\funcN(.,1): \mathfrak{M} >y$}
    \Statex \hspace{4.5em} $\forall f: y_f^\text{max} \gets k_f {-} y_f^\text{min}$
    \State $\forall f : y_f =  k_f - y_f^\text{sub}$ \label{algo:dealloc:costs}
\EndIf
\State \Return $y ,\, \sum_f \funcN( a_f, y_f )$
\end{algorithmic}
\end{algorithm}

\begin{lemma}
\label{theo:dealloc}
\Call{DeAlloc}{$p,\, \lambda', k$}~(\Cref{algo:dealloc}) removes $p$ resources at fully allocated facilities $f{\in}F'$ so that $\sum_f \funcN(\nicefrac {\lambda'_f}{\mu_f}, y_f){=}T$ is minimised while ensuring $\lambda'_f {<} y_f\mu_f, \sum_f y_f {=} p$.
\end{lemma}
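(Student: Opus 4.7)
The argument mirrors the proof of \Cref{theo:alloc} with the roles of adding and removing resources exchanged. First, I would observe that line~\eqref{algo:dealloc:minalloc} fixes $y_f^{\min} = \max\{y_f, \lceil a_f\rceil\}$ as the smallest number of resources that may remain at $f$ without violating $\lambda'_f < y_f\mu_f$; consequently $y_f^{\max} = k_f - y_f^{\min}$ is the correct capacity cap for bucket $f$ in the \textsc{MaxCostDrop} call. The preconditions $\lambda'_f < k_f\mu_f$ and $\sum_f \lceil \nicefrac{\lambda'_f}{\mu_f}\rceil \leq p$ together with the initial $y_f = k_f$ yield $0 \leq n := \sum_f k_f - p \leq \sum_f y_f^{\max}$, so the call is well-posed and the resulting allocation $y_f = k_f - y_f^{\text{sub}}$ still satisfies $y_f \geq y_f^{\min}$, giving $\lambda'_f < y_f\mu_f$ and $\sum_f y_f = p$.

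Next, I would recast the minimisation of the post-removal sum $T = \sum_f \funcN(a_f, k_f - y_f^{\text{sub}})$ into the maximisation form expected by \textsc{MaxCostDrop}. Choosing $\mathfrak{M}$ strictly larger than $\funcN(a_f, 1)$ for every $f$ keeps the transformed cost $c_f(y') := \mathfrak{M} - \funcN(a_f, k_f - y')$ strictly positive on $[0, y_f^{\max}]$, which is needed for the positive-weight matroid argument underlying the proof of \Cref{theo:maxcostdrop}. Since $\mathfrak{M}$ is a constant, minimising $T$ is equivalent to maximising $\sum_f c_f(y_f^{\text{sub}})$, and line~\eqref{algo:dealloc:costs} only inverts the variable substitution $y_f = k_f - y_f^{\text{sub}}$ to report the actual allocation.

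Finally, I would apply \Cref{theo:maxcostdrop}. Since $\funcN(a, y)$ is decreasing and convex in $y$, the function $y' \mapsto \funcN(a_f, k_f - y')$ is increasing and convex in $y'$, so $c_f(y')$ is decreasing and its successive drops $c_f(j{-}1) - c_f(j) = \funcN(a_f, k_f - j + 1) - \funcN(a_f, k_f - j)$ are non-increasing in $j$ -- exactly the monotonicity the matroid-sorting step of \textsc{MaxCostDrop} relies on. Invoking the theorem yields the maximiser of $\sum_f c_f(y_f^{\text{sub}})$, which by the equivalence above is the minimiser of $T$, establishing the claim. The main (essentially clerical) obstacle is the sign bookkeeping: the reflection $y \mapsto k_f - y'$ turns the convex decreasing $\funcN$ into a convex increasing function, so one must carefully verify that the direction of the matroid-weight inequality is preserved when translating from the ``adding'' formulation of \Cref{theo:alloc} to the ``removing'' formulation here.
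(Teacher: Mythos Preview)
Your approach is essentially the same as the paper's: both proofs mirror \Cref{theo:alloc}, use the $\mathfrak{M}$-flip to convert the minimisation into the maximisation form required by \textsc{MaxCostDrop}, and then invoke \Cref{theo:maxcostdrop}. The paper's proof is terser (it argues via the increment $T_\Delta$ rather than explicitly checking the shape of $c_f$), but the substance is identical.

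One clerical slip worth fixing: your displayed drop $c_f(j{-}1)-c_f(j)$ should read $\funcN(a_f,\,k_f-j)-\funcN(a_f,\,k_f-j+1)$ (the two $\funcN$ terms are swapped), and with the correct sign these drops are non-\emph{decreasing} in $j$ (equivalently $c_f$ is concave). That is in fact the direction the sorting step needs, since the matroid weights are $w(y_{fj})=c_f(j)-c_f(j{-}1)$ and must be non-increasing in $j$; your final conclusion therefore stands, but the intermediate monotonicity claim is inverted --- exactly the sign bookkeeping you warned about.
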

\begin{proof}
Similar to \Call{Alloc}{}, \Call{DeAlloc}{} uses \Call{MaxCostDrop}{}, but while \Call{Alloc}{} interprets dropped tokens as added resources, \Call{DeAlloc}{}  interprets dropped tokens as removed resources; this make adjustements and a new minimality deduction necessary.
With $n{=}0$ no resources need to be removed $\sum_f k_f {=} p$ having one solution with $T {=} \sum_f \funcN(a_f, k_f)$ minimal; done.
With $n{>}0$, algorithm \Call{MaxCostDrop}{} computes how many resources are removed at which facility.
After removing the resources, $T$ increases by $\sum_f \funcN(a_f, k_f) {-} \funcN(a_f, k_f{-}y_f) = T_\Delta$; so minimising $T_\Delta$ will also minimise $T$. 
As \Call{MaxCostDrop}{} maximises the token costs~(\Cref{theo:maxcostdrop}), we need to flip them around (line~\eqref{line:dealloc:cost}).
This way, $T_\Delta$ is minimal and so is $T$; done.
\end{proof}
\end{inextended}

\begin{inextended}%
\subsection{Linearise Erlang-C Formula} \label{sec:lin-erlangc}

This section describes our approach to linearise the Erlang-C function~\funcEC~\eqref{eq:erlc}.
Our algorithm based on Imamoto\etalcite{Imamoto2008a} obtains basepoints with small linearisation accuracy~\cite{Keller2015}.
This small error is achieved by successively changing the basepoints and estimating their error.
To compute basepoint changes, our algorithm needs the first derivative of the function to be linearised.
This section proposes an efficient, vectorised recursive functions to compute \funcEC, its derivative, and Erlang-C-related functions fast.
This reduces the runtime for obtaining the basepoints.

Beside the efficiency needs, \funcEC becomes not computable with a plain implementation\footnotemark\ for $k{>}145$ and $a$ near $k$.
\footnotetext{Implementation in Python 2.7, Scipy 1.8.0, \SI{64}{bit}-floats; computing $\funcEC(139, 140)$ on our laptop lasts \SI{16}{\milli\second}.}
For such $a$, $k$, the term $a^k$ or $k!$ becomes large enough to overflow\footnotemark\ and further processing is distorted.  
\footnotetext{
The number of floating-point bits determines the largest number which can be encoded, in our case $\approx 1.81 {\cdot} 10^{308}$ with~\SI{64}{\bit}.
If an operation results in a value larger than this limit the result is replaced by a special value, $\texttt{inf}$, indicating the overflow.
}
Alternatively, algebra systems such as Maxima resolves expressions symbolically allowing a greater accuracy paid by a longer runtime and memory usage\footnote{
    Computing $\funcEC(500,1000)$ with Maxima $5$.$26$.$0$ took \SI{0.6}{\second} and consumed \SI{80}{\mebi\byte}.
    $\funcEC(1000,2000)$ took  \SI{3.8}{\second} and consumed \SI{374}{\mebi\byte}.
}.
A better alternative is proposed by Pasternack\etalcite{Pasternack1998}: The recursive function \funcV~\eqref{eq:funcv} replaces the computationally challenging part.
 
For an M/M/k-queuing system with arrival rate $\lambda$, service rate $\mu$, and $k$ servers available, the probability of no jobs $P_{0\,\text{jobs}}$~\eqref{eq:p0} can be rewritten as \eqref{eq:p0withv}, $a{=}\nicefrac{\lambda}{\mu} {<} k$. 
As part of the Erlang-C formula~\eqref{eq:erlc2}, it can be rewritten similarly using \funcV~\eqref{eq:erlc2withv}.
Other queue systems' performance measures use \funcEC.
Examples are the expected number of requests in the system $\funcN(a,k)$~\eqref{eq:funcNwithv} or in queue $\funcN[^Q] (\lambda,\mu,k)$~\eqref{eq:funcNqwithv}, the expected time of request spent in system $\funcTT(\lambda, \mu, k)$~\eqref{eq:funcTwithv} or in queue $\funcTT[^Q] (\lambda, \mu, k)$~\eqref{eq:funcTqwithv}~\cite{Bolch2005}.

\bgroup
\small
\begin{IEEEeqnarray}{ll}
 \funcV(a,k) & = \sum_{i=0}^{k-1} \frac{k!}{i!} a^{i-k} \notag \\
 & = 
 \begin{cases}
 \frac 1 a & \text{ if } k{=}1 \\ 
 \frac k a (\funcV(a,k{-}1){+}1) & \text{ else }  
 \end{cases} \quad \label{eq:funcv} \\
P_{0\,\text{jobs}} & = 
    \left({\frac{k{a^k}}{k!\,\left( k-a\right) }+ \sum_{i=0}^{k-1}\frac{{a}^{i}}{i!}} \right)^{-1} \label{eq:p0} \\
&	= ({\frac{a^k}{k!} \frac{k}{k-a} + \frac{a^k}{k!} 
     \underbrace{\sum_{i=0}^{k-1}\frac{a^i}{i!}\frac{k!}{a^k}}_{=V(a,k)}} )^{-1} \notag \\
&	= \left(\frac{a^k}{k!} \, \frac{k+(k-a)V(a,k)}{k-a}\right)^{-1} \notag \\  
&	= \frac{k!}{a^k}\,\frac{k-a}{k+(k-a)V(a,k)}\label{eq:p0withv} \\
\funcEC(a,\,k) &= \frac{k{a}^{k}}{k!\,\left( k-a\right) } P_{0\,\text{jobs}}
     \label{eq:erlc2}  \\
& 	= \frac{k{a}^{k}}{k!\,\left( k-a\right) } \; \frac{k!}{a^k}\frac{k-a}{k+(k-a) \funcV(a,k)} \notag \\
&	= \frac{k}{k + (k-a) \funcV(a,k)}      \label{eq:erlc2withv}  \\
 \funcN(a, k) & =  a + \frac{a}{k-a} \, \funcEC(a,k) \notag \\
 & = a + \frac{a}{k-a} \, \frac{k}{k + (k-a) \funcV(a,k)} \notag \\
 &= a + \frac{ak}{(k-a)k + (k-a)^2 \funcV(a,k)} \label{eq:funcNwithv} \\
\funcN[^Q](\lambda,\mu,k) &= 
    \frac{\lambda}{\mu k - \lambda} \,\, \frac{k}{k + \frac{\mu k - \lambda}{\mu} V(\frac{\lambda}{\mu},k)} \notag \\
& 	= \frac{\lambda k}{k(\mu k - \lambda) + \frac{(\mu k - \lambda)^2}{\mu} \funcV(\frac{\lambda}{\mu}, k)}
    \label{eq:funcNqwithv} \\
\funcTT(\lambda, \mu, k) & = 
    \frac{1}{\mu} + \frac{k}{(\mu k-\lambda) k + \mu(k-\frac{\lambda}{\mu})^2 \funcV(\frac{\lambda}{\mu},k)} 
    \label{eq:funcTwithv} \qquad \\
\funcTT[^Q] (\lambda, \mu, k) &= \frac{k}{k(\mu k - \lambda) + \frac{(\mu k - \lambda)^2}{\mu} \funcV(\frac{\lambda}{\mu}, k)} \label{eq:funcTqwithv} 
\end{IEEEeqnarray}
\egroup

Two technical performance improvements are worth mentioning:
Pasternack\etalcite{Pasternack1998}'s recursive function is implemented as a loop that not only avoids time needed to push local variables on the stack for each recursive call but also evades the limited stack depth also limiting $k$.
Implementing this improvement results in computation magnitudes faster\footnotemark\ than the other alternatives. 
The function \funcEC with \funcV becomes not computable for $k {\gtrsim} \num{100000}$, sufficient enough for our scenario.
\footnotetext{
Computing $\funcEC(139, 140)$ via \funcV took \SI{0.6}{\milli\second} (Python 2.7).
}
In addition a vectorised version of $\funcV(a,k)$ allows computing several values $\funcV(a_0,...,a_m;\, k_0,...,k_m){=}t_0,...,t_m$ in a row (\Cref{algo:funcv}).
By masking the array fields, different loop iterations depths $k_0,...,k_m$ are processed in one loop~(\Lineref{line:funcv:mask}).
This vectorises our linearisation algorithm and, thus, is much faster than a simple implementation.

\begin{algorithm}[tb]
\caption{Vectorised function $\funcV(a,k)$~\eqref{eq:funcv}}
\label{algo:funcv}
\begin{algorithmic}[1]
\Function{$\funcV(\bar a$, $\bar k)$}{} 
\State $\bar r \gets \nicefrac {1} {\bar a}$ with same length as $\bar a$ and $\bar k$
\For{$i=2..max\{ \bar k \}$}
    \State mask$ \gets \bar k > i$
    \State $r[$mask$] \gets \frac {k[\text{mask}]} {a[\text{mask}]} (r[\text{mask}] + 1 )$ \label{line:funcv:mask}
\EndFor
\Return $r$
\EndFunction
\end{algorithmic}
\vspace{1mm}
{\small \centering With array arithmetic $\bar a {=} a_1,...,a_m:$ element-wise operations $\bar a {\circledast} \bar b \Leftrightarrow \forall i {:} a_i {\circledast} b_i$ and masking $\bar a[1,0,...,0,1] {=} a_1,a_m$.}
\end{algorithm}

Finally, the first derivation of \funcN is needed. 
The basepoints are obtained for parameter~$a$ with fix $k$, $\funcN_k(a)$.
As a result, $\funcN_k(a)$ is differentiated only for $a$; details in~\Cref{sec:apx-funcNdev}.
The first derivative $\frac{d}{da}\funcN_k(a)$~\eqref{eq:derivN} is rewritten with \funcV~\eqref{eq:derivNviaV}.

\end{inextended}

\section{Heuristic}
\label{sec:heuristic}

This section discusses an adaption of the most related (\Cref{sec:rw}) heuristic proposed by Aboolian\etalcite{Aboolian2009}.
While their work also combines the Facility Location Problem with M/M/k-queuing systems at each facility, their problem \optPABD differs from \optQP in three points: 
First, \optPABD minimises the \emph{maximal} response time whereas \optQP minimises the \emph{average} response time. 
Second, assignments in \optPABD are predefined whereas \optQP also decides the assignments.
Third, \optPABD has no resource limit per facility, which \optQP has.
These three differences necessitate adjustments to \optPABD's heuristic.

The resulting heuristic \heu consists of four major parts.
\Call{Alloc}{}~(\Cref{algo:alloc}) computes the optimal allocation for a given assignment to a subset of facilities $F_s{\subseteq}F$.
\Call{Solve}{}~(\Cref{algo:solve}) first assigns requests to the closest facilities in a given facility subset $F_s$ and computes the corresponding allocation with \Call{Alloc}{}.
\Call{Solve}{} is used by \Call{Descent}{}~(\Cref{algo:descent}), which iteratively varies the facility subset to find better subsets.
These variations are limited by two additional facility subsets $F^\text{I}$ and $F^\text{D}$, so that only a local minimum is found.
To find new local minima, \Call{Genetic}{}~(\Cref{algo:genetic}) randomly combines already found solutions.

\begin{algorithm}[tb]
\centering
\caption{Combines local solutions to find new ones.}
\label{algo:genetic}
\ifextended{}{\small}
\begin{algorithmic}[1]
\Function{Genetic}{G ,\, p} 
\State {\small \CommentLine{Create a random population of solutions}}
\State $P \gets \emptyset$, $\, l \gets \lfloor \sqrt{|F|} \rfloor$ \label{line:genetic:initb}
\While{ not enough solutions in $P$ } 
    \State $F_s \gets$ $l$ random facilities from $F$
    \State $F_s, y_f, t \gets \Call{Descent}{G ,\, F_s ,\, p ,\, F ,\, \emptyset}$ \label{line:genetic:initdescent}
    \If {solution not found} increase $l$
    \Else 
        \If {$F_s{\not\in}P$ } add $(F_s,y_f,t)$ to $P$ \EndIf 
    \EndIf
\EndWhile \label{line:genetic:inite}
\While{not enough merge steps are done} \label{line:genetic:mergeb}
    \State \CommentLine{merge two solutions}
    \State $F_s, F'_s \gets$ two random $F_s \in P$
    \State $F_\text{U} \gets F_s {\cup} F'_s$; $F_\text{I} \gets F_s {\cap} F'_s$; 
    \State $F_\text{M} \gets $ three random $f \in F \setminus F_\text{U}$ \Comment{Mutation}
    \State $F_\text{D} \gets (F_\text{U} \setminus F_\text{I}) \cup F_\text{M}$
    \State $F_\text{N} \gets F_\text{I}$ with one $f{\in}F_\text{D}$ added \Comment{Mutation}
    \State $F_s, y_f, t \gets \Call{Descent}{G , F_\text{N} , p , F_\text{D}, F_\text{I}}$
    \If{solution found $\wedge$ $F_s {\not\in} P$ $\wedge$ $t < $ largest $t$ in P}
        \State replace worst solution in $P$ with current
    \EndIf
\EndWhile  \label{line:genetic:mergee}
\State \Return $F_s, y_f, t$ from $P$ with smallest $t$
\EndFunction
\end{algorithmic}
\end{algorithm}
The meta-heuristic \Call{Genetic}{} maintains a finite set of currently best solutions $P$.
At first, an initial set of solutions is randomly generated (\Lineref{line:genetic:initb}--\Lineref{line:genetic:inite}).
This is influenced by two factors: 
The number of maintained solutions, $|P|$, is predefined; maintaining many solutions increases the chance to find different maxima but also increases runtime.
The size of facility subsets $l$ starts from $\sqrt{|F|}$ as in the
original heuristic; if no solution is found with these facilities, $I$
is increased. 
An initial solution is generated by invoking \Call{Descent}{}
(\Lineref{line:genetic:initdescent}) with $l$ randomly chosen  facilities.
\Call{Descent}{} returns a new subset $F_s$, potentially different
from the chosen facilities, the allocation $y_f$, and corresponding
average response time $t$ as the measure of  solution quality.

In \Call{Genetic}{}'s major part
(\Lineref{line:genetic:mergeb}--\Lineref{line:genetic:mergee}), two
random solutions from $P$ are combined to find  new facility subsets. 
Three such new subsets are computed: 
The \emph{intersection} $F_\text{I}$ of both solutions' subsets $F_s, F'_s$ is part of the new offspring facility set $F_\text{N}$.
The \emph{domain} $F_\text{D}$ is the union of $F_s, F'_s$ with three random
new facilities $F_\text{M} $; $F_\text{D} $; it limits the following explorations by \Call{Descent}{}.
The \emph{offspring} $ F_\text{N}$ construction  is the same as in the original heuristic. 
A local solution found by \Call{Descent}{} replaces the worst solution in $P$ if the newly found solution is better than the replaced one.
After a finite number of merge operations, the best solution found so
far is returned.

\begin{algorithm}[tb]
\centering
\caption{Refines solution towards local optimum}
\label{algo:descent}
\ifextended{}{\small}
\begin{algorithmic}[1]
\Function{Descent}{$G ,\, F_s ,\, p ,\, F_\text{D} ,\, F_\text{I} $} 
\State $F_s^\text{min}, y_f^\text{min}, t^\text{min} \gets F_s, \Call{Solve}{F'_s ,\, p}$
\While{smaller $t^\text{min}$ was found}
    \For{$F'_s \in \Call{Neigh}{F_s^\text{min} ,\, F_\text{D} ,\, F_\text{I}}$}
        \State $y'_f,\, t' \gets $\Call{Solve}{$G ,\, F'_s ,\, p$}
        \If{$t'<t^\text{min}$}
            \State $F_s^\text{min}, y_f^\text{min}, t^\text{min} \gets F'_s, y'_f, t'$
        \EndIf
    \EndFor
\EndWhile
\State \Return $F_s^\text{min}, y_f^\text{min}, t^\text{min}$
\EndFunction
\end{algorithmic}
\end{algorithm}
\Call{Descent}{} starts with a given facility subset $F_s$ and searches so-called \emph{neighbouring} subsets for better solutions. 
Aboolian\etal define $F_s$'s neighbourhood (\Call{Neigh}{}) as a set of facility subsets constructed from $F_s$ by a) removing one facility , b) adding one facility, or c) doing both \eqref{eq:descentneigh}.%
\bgroup
\small
\begin{IEEEeqnarray}{lls}
\Call{Neigh}{F_s ,F_\text{D} , F_\text{I}} \; 
& = \left\lbrace F_s {\cup} \{f\} \;\vert\; \forall f {\in} F_\text{D} \setminus F_s \right\rbrace  \qquad & (a) \notag \\
& \cap \left\lbrace F_s {\setminus} \{f\} \;\vert\; \forall f {\in} F_s \setminus F_\text{I} \right\rbrace  \qquad & (b) \notag \\
\IEEEeqnarraymulticol{2}{l}{ \hspace{3em} \cap \left\lbrace F_s {\cup} \{f\} {\setminus} \{f'\} \;\vert\;  \forall f {\in} F_\text{D} \setminus F_s, \,   \forall f' {\in} F_s \setminus F_\text{I} \right\rbrace}  \qquad & (c) \notag \\
\IEEEeqnarraymulticol{3}{r}{\text{ with } F_\text{I} \subseteq F_s \subseteq  F_\text{D} \; \subseteq F \hspace{1em}}  \label{eq:descentneigh}
\end{IEEEeqnarray}
\egroup

If at least one better solution is found in the \emph{subset neighbourhood}, the search continues for the best found solution. The algorithm hence 
descends towards a local minimum.

The facility sets $F_\text{D}, F_\text{I}$ restrict the cardinality of \Call{Neigh}{} and thus the number of instances solved by \Call{Solve}{}. 
This is done by ensuring that $F_\text{I}$ -- the intersection of the two parents -- is always part of $F_s$ and that $F_\text{D}$ -- the modified union of the two parents -- is the subset of $F$ to which $F_s$ can grow.

\begin{algorithm}[tb]
\centering
\caption{Approximate assignment and optimal allocation}
\label{algo:solve}
\ifextended{}{\small}
\begin{algorithmic}[1]
\Function{Solve}{$G, F_s ,\, p$} 
\State $\forall c,f {\in} C {\times} F : x_{cf} \gets 0$  \label{line:solve:begin}
\For{$c, f {\in} C{\times}F_s$ sorted by increasing $l_{cf}$}
    \State dem $\gets \lambda_c {-} \sum_{f'} x_{cf'}$
    \State cap $\gets k_f\mu_f{-}\sum_{c'} x_{c'\hspace{-0.5mm}f}$
    \If {dem $ > 0 \, \wedge $ cap $ >0$}
        \State $x_{cf} \gets \min\{ $ dem $,\,$ cap $  \}$
    \EndIf \label{line:solve:end}
\EndFor
\State \Return $\Call{Alloc}{G, \, \forall f: \sum_c x_{cf}, \, p}$
\EndFunction
\end{algorithmic}
\end{algorithm}
\Call{Solve}{} (\Cref{algo:solve}) computes the assignment and allocation for a given facility subset $F_s$ using $p$ resources.
The first part assigns the demand $\lambda_c$ to the closest facilities.
This assignment is more complex than \optPABD's assignments which 
neither considers resource capacities ($\mu_f$) nor limits ($k_f$).
When a facility's resources are exhausted but demand still exists, this demand is served by another facility.
What was a simple binary assignment to the nearest facility in Aboolian's problem (\optPABD)  becomes an optimisation problem to minimise the total assignment distances (i.e., $\min_{x} l_{cf} x_{cf}$) under capacity and serve-all-demand constraints.
To solve this problem efficiently, the assignment is computed by the greedy heuristic in \Lineref{line:solve:begin}--\ref{line:solve:end}:
Demand is served by the nearest facility $f$ with remaining capacity.
To do so, $c,f$ pairs with the shortest distance are handled first.
If demand remains to be distributed (i.e., $\lambda_c \sum_{f'} x_{cf'} {>}0$) and facility $f$ has enough capacity left ($\mu_f k_f {-} \sum_{c'} x_{c'\hspace{-.5mm}f} >0$) then as much demand as possible ($\Delta$) is assigned from $c$ to $f$.
This way, the demand is moved to free facilities in a way similar to a multi-source breath-first search.

\begin{notinextended}
While this heuristic is in most cases sufficient, in rare cases swapping the assignments $x_{cf}$ would further improve the solution. This paper's extended version~\cite{Keller2015b} discusses these cases and proposes an extension to \Call{Solve}{}.
\end{notinextended}

\begin{inextended}
While this heuristic is in most cases sufficient, it does not compute the total average RTT in all cases.
Let us consider two clients $c,c'$ with demand, $\lambda{=}1$, and facilities $f,f'$ with one capacity, $\mu{=}1$, having the following latencies $l_{cf}{=}1,\,l_{cf'}{=}l_{c'\hspace{-0.5mm}f}{=}2,\,l_{c'\hspace{-0.5mm}f'}{=}4$. 
The assignment heuristic (\Lineref{line:solve:begin}--\ref{line:solve:end}) computes $x_{cf}{=}x_{c'\hspace{-0.5mm}f'}{=}1$. 
First the $c,f$ pair is visited, then $c'\hspace{-0.5mm}f$ is not possible as capacity is exceeded, and finally $c'\hspace{-0.5mm}f'$ is paired. 
These assignments result in total RTT $5$ but the optimum is $4$ with assignment $x_{cf}{=}x_{c'\hspace{-0.5mm}f'}{=}1$.

One way to handle this case, is to check for total distance reduction when swapping assignments in a post-processing. 
The swapping procedure is very time consuming as considering all pairs only once is not sufficient to obtain the optimal solution.
The described special case rarely occurred in our test instances and we had favoured short runtime, so that we omitted such a post-processing.

The presented assignment heuristic applies an extended nearest-facility assignment rule and, hence, is more restrictive than \optQP.
We tested an alternative implementation for \Call{Solve}{}: Assignment and allocation are obtained by our fastest optimisation problem \optQPc for facility subset $F_s$.
However, \Call{Solve}{} as the basic function of \heu is called very often and the runtimes were unsatisfactory longer than the presented implementation, much longer than any other presented approach to solve \optQP discussed in \Cref{sec:eval}. 
Because of this, we entirely skipped this variant. 
\end{inextended}

\section{Evaluation} \label{sec:eval}

\begin{inextended}
\subsection{Obtaining the Basepoints} \label{sec:eval:bps}

The linearised formulations' quality and solving time depend on the choice of basepoints.
For all optimization problems above, more basepoints improve the linearisation accuracy and improve solution quality.
But more basepoints also increase the search space  and extend solving time.

Four control factors influences the accuracy: 
a)~The number of basepoints~$m$ for one curve $\funcNa_j(a)$; 
b)~the number of curves/allocated resources $n$, $1{<}n{\leq}k_f$; 
c)~the linearisation interval's upper bounds for $a$ and $j$;
d)~the basepoint positions themselves.

\begin{table}
\centering
\caption{Investigated values for $J$ and $m$.} 
\def\arraystretch{1.2}
\label{tbl:basepointcombinations}
\begin{tabular}{rll|l}
\toprule
{$J$'s \footnotesize shorthand} & $J = n$ & $n$ & $m$\\
\midrule
k100 & $1,...,100$ & $100$ & $30$  \\
fib & {\small$1,2,3,5,8,13,21,34,55,89,100$} & $11$ & $15$\\
$2^i$ & $1,2,4,8,16,32,64,100$ & $8$ & $8$ \\
$3^i$ & $1,3,9,27,81,100$ & $6$ & $6$ \\
$4^i$ & $1,4,16,64,100$ & $5$ & $4$\\
k3 & $1,50,100$ & $3$ \\
\bottomrule
\end{tabular}
\end{table}

For factor~(c), we allow $0{<}a{<}0.98j$ similar to our previous work~\cite{Keller2015} and $1{<}j{\leq}100$. 
Setting the maximal number of available resources $\forall f: k_f{=}100$ is large enough for our evaluation. 
Our findings can be applied for larger values of $k_f$ that decrease either the linearisation accuracy (for fixed $m$) or the number of basepoints $m$ (for fixed  accuracy)~\cite{Keller2015}.
For factor~(c), our algorithm~\cite{Keller2015} is applied obtaining basepoint sets with high linearisation accuracy.
For factors~(a) and~(b), different configurations for $m$ and $J$ are investigated.
\Cref{tbl:basepointcombinations} shows these configurations.
For example, $J{=}[1,50,100]$ means $n=3$ curves with $y{=}1$, $5$, or $100$ are considered, indicated by the shorthand k3.
For each list $J$, the corresponding $\funcNa_j(a)$, $j{\in}J$ are approximated by $m=4$, $6$, $8$, $15$, or $30$ basepoints.
The table describes $30$ different configurations~$(J,m)$.

The concrete values for $m$ result from the following thoughts: 
When adding a basepoint to a PWL function (for one curve), accuracy improvement has diminishing returns~\cite{Keller2015}, \eg, adding a basepoint to 4 basepoints helps more than adding one to 8 basepoints. 
So, many low values for $m$ cover the interval where accuracy changes significantly and a few large values  provide hints towards  asymptotic behaviour.
The cases for $m=2$ or $3$ basepoints are dropped; $m{=}2$ results in one line segment and with $m{=}3$, having two line segments, approximating our convex function still results in low accuracy.

The concrete values for $J$ result from the following thoughts:
As discussed in \Cref{sec:lin:curvethin}, dropping larger $j{\in}J$ from $J$ results in less inaccuracy than dropping small $j$, so we choose sequences like $a_n{=}2^n$ favouring many small values over few large values.
Additionally, $J{=}k100$ is added as a baseline comparison using all curves.

\begin{figure}[tb]
\centering
\includegraphics[scale=0.95]{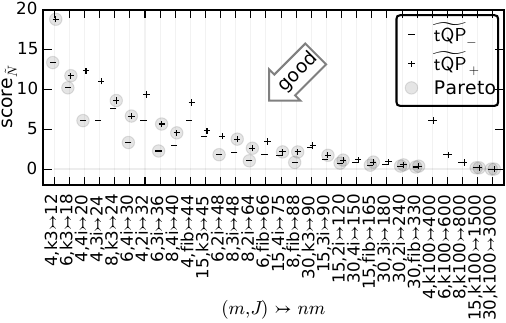}
\caption{
Surface linearisation accuracy $\epsilon_{\tilde N}$~\eqref{eq:pwlerror}.
}
\label{fig:errorscore}
\end{figure}

\Cref{fig:errorscore} shows the error $\epsilon_{\tilde N}$ defined in equation~\eqref{eq:pwlerror} for each $(m, J)$~configuration~(\Cref{tbl:var}) plotted as a function of the number of basepoints $nm$; its labels encode the different configurations for $m, J$ (using the above shorthand for $J$).
Two symbols correspond to the two triangle orientations~(\Cref{fig:surfacemodel}) used to define the approximating surface~$\tilde S$.
\bgroup
\small
\begin{IEEEeqnarray}{l}
\epsilon_{\tilde N} := max_{a,k} \vert\, \tilde N(a,k) - N(a,k) \,\vert \notag \\
\tilde N \text{ is paremeterised by } (m,\, J,\, \text{triangle orientation} + / - ) \qquad \label{eq:pwlerror}
\end{IEEEeqnarray}
\egroup

\noindent \Cref{fig:errorscore} also illustrates the trade-off between two conflicting metrics: Few basepoints $nm$, shown on the left, and low error $\epsilon_{\tilde N}$, shown on the right.  Pareto-optimal solutions (for each triangle orientation) are shown grey-shaded.

For our evaluation we choose the basepoint sets as follows: 
Since many basepoints $nm$ increase the search space size, large $nm{>}330$ were skipped; the lowest error of these skipped configurations is only marginally smaller than the lowest error of the remaining configurations.
The remaining $nm$ values were split in three partitions.
In each partition, one configuration was chosen with the lowest error $\epsilon_{\tilde N}$ satisfying the Pareto property: $(m,J) {=} (15,2^i), (8,3^i), (6,4^i)$.
In addition, configuration $(8,$k100$)$ from the baseline basepoint sets (k100) was added. 
The number of basepoints $m{=}8$ provides a good trade-off between low error and few basepoints, \eg, with $25\%$ fewer basepoints the error nearly doubles $(6,$k100$)$ and with $88\%$ more basepoints the error is only divided by four $(15,$k100$)$.

\end{inextended}

\subsection{Approach Comparison}
\label{sec:appro-compar}
The main question is which presented approach (one of optimisation formulations  or
heuristic) solves \optQP\ best, where \emph{best} is described by two
conflicting metrics: Quality (accuracy) and solving time.  We consider 
examples in which we vary four factors: 
Topology \hG, demand distribution \hD,
basepoint set \hB,  and resource limit \hp.

\begin{inextended}

\begin{table*}
\centering
\caption{Topology overview.} \vspace*{-2mm}
\def\arraystretch{1.3}
\small
\label{tbl:topologies}
\begin{tabular}{l@{\ } l S@{} S@{\hspace{3mm}} S@{}  S@{} l@{}} 
  \toprule 
   {Name$_\text{source}$} & { Mark} & {Size} & { Density} & { $\varnothing $RTT} & { $\text{diam}_\text{RTT}$} & { Degree Perc.} \\ 
   \multicolumn{1}{c}{\centering\hG} & & { $|N|$} & {$2 \nicefrac{|N|}{|E|(|E|-1)}$} & {[ms]} & {[ms]} & [ min $25\,$\% median $75\,$\% max ] \\
  \midrule 
  Columbus {\tiny zoo} &  d & 31 & 0.211 & 326.5 & 1203.8 & [   1.0   3.0   4.0  11.0  12.0 ] \\ 
  Cesnet20 {\tiny zoo} & n- l- & 38 & 0.064 & 189.5 & 387.3 & [   1.0   1.0   1.0   3.0  15.0 ] \\ 
  giul39 {\tiny snd} &  d & 39 & 0.116 & 447.9 & 852.1 & [   3.0   3.0   4.0   5.0   8.0 ] \\ 
  pioro40 {\tiny snd} & n- l+ & 40 & 0.114 & 545.6 & 1140.8 & [   4.0   4.0   4.0   5.0   5.0 ] \\ 
  Colt {\tiny zoo} & n/ l/ & 149 & 0.016 & 546.8 & 1338.5 & [   1.0   1.0   2.0   3.0  18.0 ] \\ 
  UsCarrie {\tiny zoo} & n/ l/ & 152 & 0.016 & 732.7 & 2076.6 & [   1.0   2.0   2.0   3.0   6.0 ] \\ 
  Cogentco {\tiny zoo} & n/ l/ & 186 & 0.014 & 865.9 & 2260.1 & [   1.0   2.0   2.0   3.0   9.0 ] \\ 
  Kdl {\tiny zoo} & n+ l+ & 726 & 0.003 & 1476.4 & 4317.1 & [   1.0   2.0   2.0   3.0  10.0 ] \\ 
  intercon* {\tiny zoo} & n+ l/ & 1108 & 0.002 & 595.8 & 1758.4 & [   1.0   1.0   2.0   3.0  54.0 ] \\ 
  kingtrace* & n+ l- & 1819 & 0.180 & 47.9 & 726.4 & [   3.0 220.0 242.0 467.0 559.0 ] \\ 
  \bottomrule
  \multicolumn{7}{l}{(*) Marked topologies were too large to be solved efficiently.}
\end{tabular}
\end{table*}

\end{inextended}

Candidate topologies were collected from different sources: sndlib\footnote{\label{fn:lat} Round trip times were approximated by geographical distances~\cite{Kaune2009}. Nodes without geolocation positions were removed and their neighbours were directly connected.}~\cite{Orlowski2007}, 
topology zoo\cref{fn:lat}~\cite{Knight2011}, and kingtrace\footnotemark~\cite{Gummadi2002}.
\footnotetext{In kingtrace, a sparse matrix specifies point-to-point latencies. Some latencies were only available in one direction. We assume the same latency for the opposite direction; otherwise those nodes would have to be discarded.}
The topology sources offer $534$ candidate topologies from which $8$ were selected \ifextended{ (\Cref{tbl:topologies})}
by three properties: 
First, the number of nodes increases the problem size quadratically. We
selected three small\ifextended{~(marked as~n-)} (with at least 20 nodes), three medium\ifextended{~(n/)}, and one\footnotemark\ large\ifextended{~(n+)} topologies.
\footnotetext{Only one of three large topologies were solved within the time limit, see extended version~\cite{Keller2015b}.}
Second, the round trip times contribute to the objective function.
The selected topologies have low\ifextended{~(l-)}, medium\ifextended{~(l/)}, or high\ifextended{~(l+)} average round trip times, $\varnothing\text{RTT} {=} \nicefrac{1}{|E|} \sum_{vv'} l_{vv'} $.
Third, resource distributions will be degree-dependent, \eg, few resources on poorly connected nodes and many resources on well connected nodes.
Two topologies were explicitly selected \ifextended{(marked by~d)}
whose  quartiles of node degrees differed from the other
topologies\ifextended{~(\Cref{tbl:topologies})}.

The available resources, in total $\sum_f k_f {=} 5|N|$, are assigned
to nodes weighted by degree.
All resources are homogeneous with $\hMu {=} \SI{100}{\req\per\second} {=} \mu_f,\, \forall f$.

The second factor \hD describes how the individual Poisson processes' arrival rates $\lambda_c$ are assigned to nodes $c$. For each evaluation run, we choose all $\lambda_c$ randomly but keep them fixed within one run. Averaged over all runs, the expected rate is $\hat \lambda$ and all $\lambda_c$ are distributed according to one of three different distributions:
a) Gaussian normal distribution with small standard deviation, $\hD {\hat =} \text{N}(\hat\lambda, \nicefrac {\hat\lambda} {20} ) {=} \text{N}_1$ %
b) Gaussian normal distribution with large variance  $\hD {\hat =} \text{N}(\hat\lambda, \hat\lambda ) {=} \text{N}_2$, %
c) and exponential distribution $\hD {\hat =} \text{Exp}(\hat\lambda)$ with even stronger variations to reflect local hot spots.  %
We ignore nodes with negative arrival rates, hence the random variables $\lambda_c$ are defined as follows: $\forall c {:} \lambda_c {=} \max \{ 0, X\}$ with $X {\sim} \hD$.\footnote{This slightly skews the expected values, but for the concrete scenarios, this effect was negligible.}

The mean arrival rate is $\hat \lambda {=} \nicefrac{0.98}{2} \sum_f k_f \mu_f$.
This way, on average half the available resources are needed to handle the
demand. This is the typical scenario where taking both round trip time and queuing delay pays off -- if almost all resources are needed anyway, there is no freedom of choice; if few resources are needed, queueing delay is unimportant and the problem 
degenerates into a simple RTT optimization problem.  Technically, the reduced factor $0.98$ instead of $1$ reflects the linearisation bound $\alpha_m{=}0.98k$\ifextended{~(factor (d) in \Cref{sec:eval:bps})} and ensures that all queuing systems are in steady state.

The third factor is the resource limit $\hp{=}\sum_fy_f$.  It influences the average resource utilisation and, hence, the average queuing delay.  The higher the resource limit, the lower the resource utilisation.
Since the total number of available resources ($\sum_f k_f$) differs
among the topology sizes\ifextended{~(\Cref{tbl:topologies})}, the
resource limit is cast as a function of this total number of
resources, $\hp {=} \lceil a \, \sum_f k_f \rceil$; for values
$a{<}0.5$ the input becomes infeasible due to the selection of $\hat
\lambda$, where half of the resources are needed to handle the demand.
For the evaluation, factor $\hp$ uses $a{=}\numlist{0.5625; 0.625 ;
  0.6875; 0.75 ; 0.8125; 0.875 ;0.9375}$; where $a=0.56$ allows
slightly more resources than needed for the demand and $a=0.94$ allows using nearly all resources.

While the first three factors vary the topology, the fourth factor  basepoint set \hB varies the linearised problems.
\begin{inextended}
From \Cref{sec:eval:bps}, \hB is $(m,J) {\in} \lbrace (15,2^i), (8,3^i), (6,4^i), (8,k100) \rbrace$.
\end{inextended}
\begin{notinextended}
Each basepoint set is described by two parameters~$m$ and
$J$~(\Cref{sec:lin:curvethin}), where $m$ is the number of basepoints used for
each PWL function~$\funcNa_j$, $j{\in}J$. 
Numerous combinations are possible but considering all of them is impractical.
Instead,  a preliminary evaluation compared values for $m$ and
reasonable sequences for $J$ for their linearisation accuracy; the extended version~\cite{Keller2015b} describes  this in detail.
Four sets of different sizes with high corresponding accuracy are selected $\hB{=}(m,J) {\in} \lbrace (15,2^i), (8,3^i), (6,4^i), (8,k100) \rbrace$, where $J{=}a^i$ refers to the sequence $J{=}[a^1,a^2,..100]$ and $k100$ refers $J{=}[1,..,100]$.
\end{notinextended}
 
Putting all factors together, $168$ factor combinations~(\hG, \hD,
\hp) are considered and for each one, $50$ random demand realisations are generated. 
This results in \num{8400} different \emph{configurations} to be solved by
either the optimization solver or the heuristic.
The linearised problems ($\optQPa, \optQPb, \optQPbb, \optQPc$)  use $3$
basepoint sets $\hB{=}\lbrace (15,2^i), (8,3^i), (6,4^i) \rbrace$.
In addition, \optQPa is solved with basepoint set $(8,k100)$\ifextended{~(\Cref{sec:eval:bps})} and serves as the baseline comparison; this case considers all curves and, hence, has the highest linearisation accuracy and the best solution quality.
The randomised heuristic solves each configuration up to $15$ times to obtain statistically meaningful results.
But for medium or large topologies the heuristic did not compute
a solution in reasonable time because after $6$ hours it still builds
up its initial population.
As an optimisation solver, Gurobi\footnotemark{} is used 
\footnotetext{Gurobi version 5.6.3, Python version 2.7.9}
and is configured to stop solving after one hour\footnotemark{}.
Especially for larger topologies, this often causes optimality gaps up
to $20\%$. 
\footnotetext{Hardware used: 2 of 6 Xeon X5650 Cores, \SI{2.7}{\giga\hertz}, \SI{4}{\giga\byte} RAM}

\begin{notinextended}
The extended version~\cite{Keller2015b} compares solution quality (average response time) for single factor combinations; this paper  aggregates  the core findings.
\end{notinextended}

\begin{inextended}
The remaining section first discusses example results (\Cref{sec:eval:comp:detail}) and then presents aggregated statistics revealing the core findings.
\subsubsection{Detailed Results} \label{sec:eval:comp:detail}
\begin{figure*}[tbp]
\centering
\begin{subfigure}[t]{0.9\textwidth}
\includegraphics[width=\textwidth]{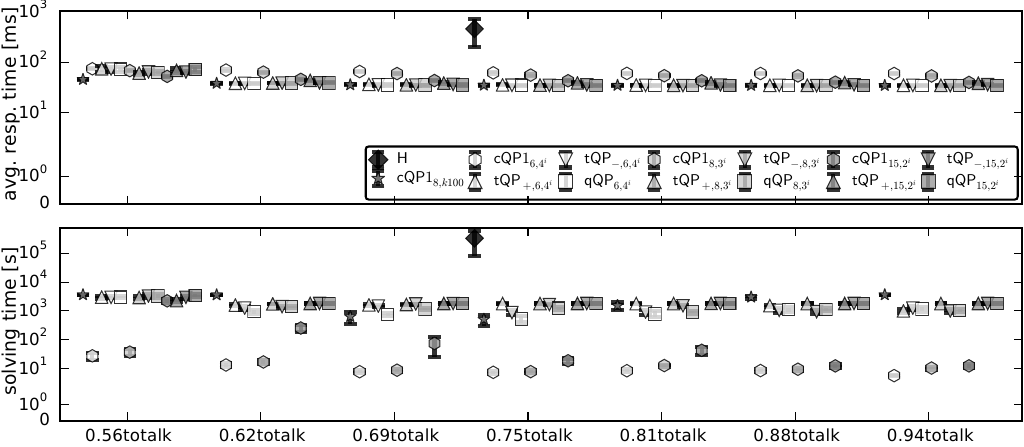}
\caption{For $\hG=\texttt{Colt}$, $\hD=\text{Exp}$}
\label{fig:algo_colt_exp}
\end{subfigure}
\vspace{2mm}
\begin{subfigure}[t]{0.9\textwidth}
\includegraphics[width=\textwidth]{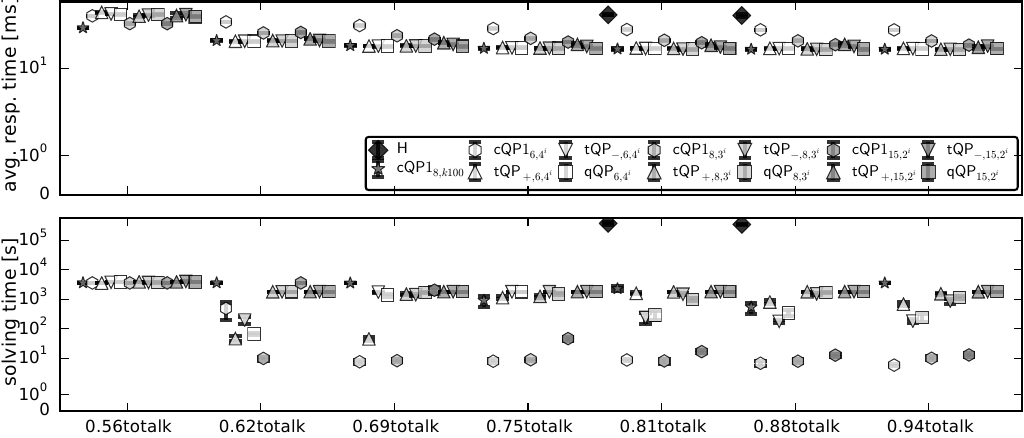}
\caption{For $\hG=\texttt{Colt}$, $\hD=\text{N}_1$}
\label{fig:algo_colt_n}
\end{subfigure}
\vspace{2mm}
\begin{subfigure}[t]{0.9\textwidth}
\includegraphics[width=\textwidth]{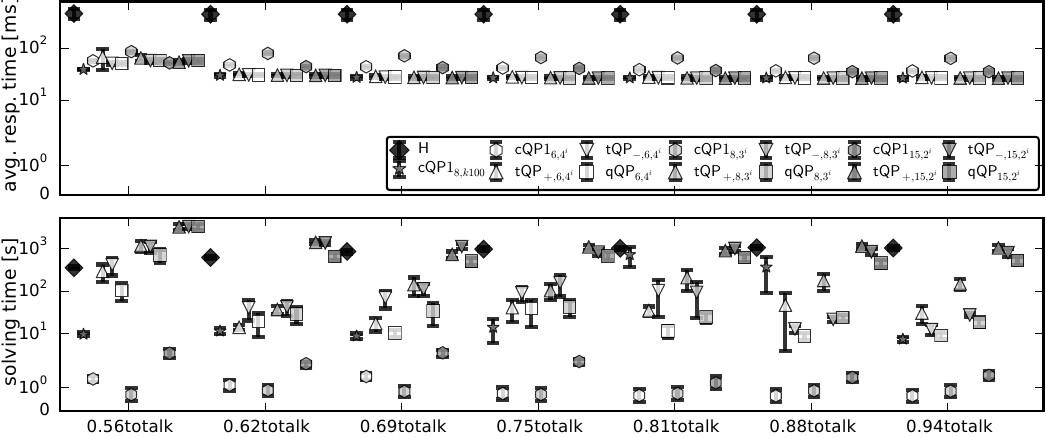}
\caption{For $\hG=\texttt{Pioro}$, $\hD=\text{N}_2$}
\label{fig:algo_pioro_exp}
\end{subfigure}
\caption{Picked detail comparison of algorithms -- quality and runtime.\label{fig:algo_all}}
\end{figure*}

\Cref{fig:algo_colt_exp} shows two plots with either the solution quality as the average response time and the corresponding average solving time with $95\%-$confidence intervals.
They compare all $14$ solving approaches, represented by different symbols, for different utilisation levels $\hp$.
\Cref{fig:algo_colt_exp} shows results for medium topology \texttt{Colt} with exponential demand distribution.
Below, \Cref{fig:algo_colt_n} shows results for the same topology but with normal demand distribution $\text{N}_1$.
At last, \Cref{fig:algo_pioro_exp} shows results for smaller topology \texttt{Pioro} with demand distribution $\text{N}_2$.

In the shown plots, the baseline approach $\optQPa_{8,k100}$ computes the best solution (smallest avg. resp. time) among the other approaches but is the slowest approach, except for small topologies (e.g. \Cref{fig:algo_pioro_exp}) where other approaches are significant slower.
\Cref{sec:eval:comp:baseline} describes how $\optQPa_{8,k100}$ performs among the other topologies.

The different formulations $\optQPa_j$, $\optQPb[+,j]$, $\optQPb[-,j]$, and $\optQPc_{j}$ are grouped together and 
the three basepoint sets $j\in\lbrace(6,4^i)$, $(8,3^i)$, $(15,2^i)\rbrace$ form three groups distinguished in \Cref{fig:algo_all}a-c by different grey levels. 
Among these four approaches, the thinned curve formulation $\optQPa_j$ is the simplest and the fastest in each group, sometimes magnitudes faster, while the quality is a bit worse than the other approaches.
When comparing the surface linearisations, the quadrilateral-based formulation~$\optQPc$ is as good as the other two formulations but is solved faster in some cases.
\Cref{sec:eval:comp:surfaceform} describes how the surface linearisations perform among all configurations.

Finally, the last solving approach uses the genetic algorithm as an heuristic. For medium and large topology, the heuristics's solving time exceed a maximum runtime of six hours in most cases; for those cases no measurements are available.
\Cref{sec:eval:comp:heuristic} describes how the heuristic performs among all configurations.

\end{inextended}

\subsubsection{Baseline Algorithm} \label{sec:eval:comp:baseline}
\begin{figure}[tbp]
\centering
\includegraphics[scale=0.90]{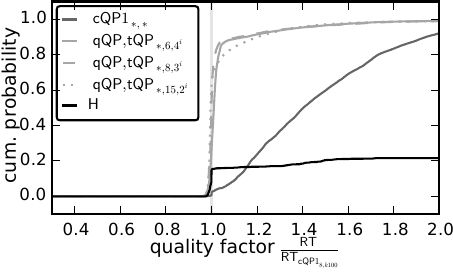}
\caption{Comparing the quality of baseline approach $\optQPa_{8,k100}$ with the other approaches as a CDF plot.}
\label{fig:algo_baseline}
\end{figure}

The approach $\optQPa_{8,k100}$ is the baseline formulation. We
compare solution alternatives to the baseline by computing the ratio
of the alternative's quality to the baseline's quality individually
for each of the \num{8400} configurations
(ratio $>1$ means the baseline algorithm performs better).  As this
produces many individual results, we group similar solution
alternatives together (see below for details). The ratios in these
groups are then jointly described by empirical cumulative density
functions~(ECDFs), \eg \Cref{fig:algo_baseline}.

We identify five groups of similar solution alternatives. 
The first group contains all thinned curves approaches
($\optQPa_{6,4^i}$,$\optQPa_{8,3^i}$,$\optQPa_{15,2^i}$); for this
group, $99.4\%$ of all configurations are solved better by the baseline approach. 
For the remaining configurations, the lowest ratio is $0.97$; the better solutions are caused by \Call{Alloc}{} which uses the exact queuing delay function. 

The second, third, and fourth group contain all surface linearisations
with basepoint sets $\hB=(6,4^i)$, $(8,3^i)$, and $(15,2^i)$. 
For these groups, the configuration are solved similarly well as with the baseline approach.
Surprisingly,  $26\%$,  $51\%$, and $63\%$ of all configurations
result in ratios below~$1$, with the lowest ratios being $0.95$, $0.96$, and $0.96$.
This is caused by algorithm~\Call{Alloc}{} called by the post-processing~\Call{Search}{}.

The fifth group contains all heuristic solutions, which are very good for small topologies but for the other topologies the quality was magnitudes worse. 
\Cref{sec:eval:comp:heuristic} has more details about the heuristic.

To summarize, solutions obtained by \emph{$\optQPa_{8,k100}$ are good references for the expected solution quality} and 
\emph{solutions obtained by \optQPbb[], \optQPc are similarly good}.
Better solution qualities than the baseline algorithm's quality
involved using \Call{Alloc}{}; the difference to the baseline is always small and results from using the exact instead of the linearised queuing delay. 
The difference relates to the linearisation accuracy.

\subsubsection{Thinned Curves \vs Surface Linearisations} \label{sec:eval:comp:surfaceform}

One of the goals of this paper was to compare univariate
vs.\ bivariate linearisations of the time-in-system function. To do so,
we compare here the quality ratios obtained from either thinned curve
linearisation (Section~\ref{sec:lin:curvethin}) or surface
approximations (Sections~\ref{sec:lin:surfacetriangle} and
\ref{sec:lin:surfacequad}). For the comparison, we fix the basepoint
sets $j\in\lbrace(6,4^i)$, $(8,3^i)$, $(15,2^i)\rbrace$. For each
basepoint set, we compute the quality ratio by dividing the solution
quality  of $\optQPb[+,j]$, $\optQPb[-,j]$, or
$\optQPc_{j}$  by that of  $\optQPa_j$. The resulting ratios again
give raise to three ECDFs, one per basepoint set. We do the same thing
for the solving times, obtaining three more ECDFs.

\begin{figure}[tbp]
\centering
\includegraphics[scale=0.90]{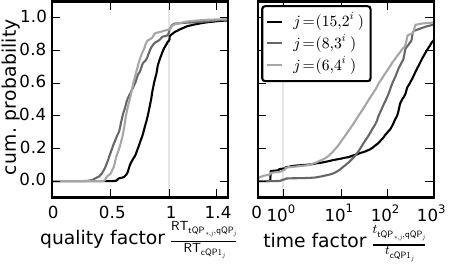}
\caption{Comparing the quality and solving time ratios of $\optQPa_{j}$ with $\optQPb[*,j]$ and $\optQPc_j$.}
\label{fig:algo_surfacegroup}
\end{figure}

\Cref{fig:algo_surfacegroup} shows ECDFs of these quality and time
ratios for each of the basepoint sets.
Most ($92.6\%$, $90\%$, $86.7\%$) quality ratios were smaller than 1. However, solving configurations with surface-base approaches takes magnitudes longer than with $\optQPa_j$; 
$32\%$, $48\%$, and $70\%$ of the surface approaches took $100$
times longer than the thinned curves approach.

This longer solving time has two causes: 
First, $\optQPa_j$ is much simpler and has a fast post-processing~(\Call{Alloc}{}) while \optQPb[] or \optQPc are invoked multiple times by \Call{Search}{}. 
A better but more complex post-processing, adjusting an over-provisioned
solution obtained by \optQPb[] or \optQPc, could reduce their solving
times; this is for further study.
Second, Gurobi stopped improving the solution after one hour; so without this limit, the time factor likely increases.
To conclude, \emph{among the linearised problems, the thinned curve approach $\optQPa$ showed good quality with the shortest solving time in most of the cases}.

\subsubsection{Triangle \vs Qudrilateral Surfaces}
\label{sec:triangle-vs-qudr}

Similar to the comparison of curves vs.\ surfaces, we are interested
in characterizing the behaviour of the different surface approaches. To
this end, we compute quality and solving time ratios of the triangles
divided by the quadrilateral approaches
(\Cref{fig:algo_surfacegroup2}). 

\begin{figure}[tbp]
\centering
\includegraphics[scale=.90]{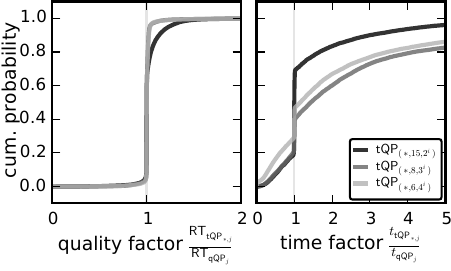}
\caption{Comparing the quality and solving time ratios of $\optQPa_{j}$ with $\optQPb[*,j]$ and $\optQPc_j$.}
\label{fig:algo_surfacegroup2}
\end{figure}
For each group, in $65\%$, $72\%$, and $62\%$ of all configurations,
the quadrilateral approach  \optQPc is faster than the other formulations. 
But in $27\%$, $18\%$, and $17\%$ of all configurations, \optQPc is $10\%$ slower than the other formulations. 
For each group, $87\%$, $87\%$, and $88\%$ of all configurations are solved equally good or better with \optQPc than with the other formulations 
and in the remaining configurations \optQPc is worse than the other
formulations. This confirms the structural arguments for the
similarity between these two approaches. 
The first \Call{Search}{} iteration results in different over-utilised solutions' when using \optQPc or \optQPb[*].
Then, \Call{Search}{} continues solving problems with different limit $p$ and that results in different solutions.
To conclude, \emph{the problem~\optQPc embedded in algorithm
  \Call{Search}{} obtains in most cases a very good solution in less
  time than the triangle-base problems}.
 
In addition, we compared solutions of \optQPb[*] and \optQPc of a subset of $200$~configurations.
Doing so, we could support two arguments present previously:
(a)~The triangle orientation influences the solution quality -- the triangle orientation is the only difference between \optQPb and \optQPbb, where \optQPb's solution quality were always smaller/better than \optQPbb's solution quality.
(b)~Using quadrilaterals instead of triangles is reasonable -- the solution and solution quality for \optQPb and \optQPc matched in all configurations.

\subsubsection{Heuristic} \label{sec:eval:comp:heuristic}

The heuristic has a long solving time\footnotemark, so a six-hour time
limit was imposed; solving times beyond this limit are entirely impracticable for our scenario.
\footnotetext{The heuristic was implemented in Python and utilised NumPy to speed up computations. In contrast, Gurobi  is highly optimised. This biases the solving time comparison a bit, but the results reveal that even a highly optimised heuristic will be outperformed by Gurobi for large topologies.}
Each configuration was solved $15$ times to average out random
variations in  the heuristic \Call{Genetic}{} itself.
Some of these solving attempts were successful while others failed for the same configuration.
\begin{inextended}
\Cref{tbl:heu_succ} shows how many attempts were successful, grouped by topologies.

\begin{table}[tbp]
\centering
\caption{Obtained solutions of all configurations with heuristic.\label{tbl:heu_succ}}
\begin{tabular}{l|cccc} 
\toprule
Name, $G$ & {\small Columbus} & {\small Cesnet20} & {\small guil39} & {\small pioro} \\ 
\midrule
Size, $|N|$ & 31 & 38 & 39 & 40  \\ 
$\#\,$Sol. $\%$ & 30 & 39.7 & 40 & 89.3 \\
\bottomrule 
Name, $G$ & {\small Colt} & {\small UsCarrier} & {\small Cogento} & {\small Kdl} \\ 
\midrule
Size, $|N|$ &  149 & 152 & 186 & 726 \\ 
$\#\,$Sol. $\%$ &  0.2 & 0 & 0.7 & 0 \\ 
\bottomrule 
\end{tabular} 
\end{table}
\end{inextended}

In total, $85.1\%$ of all attempts were not solved in time and most of them do not advance beyond \Call{Genetic}{}'s initial phase.
The size of the topology corresponds directly to a large neighbourhood visited several times in \Call{Decent}{}~(\Cref{sec:heuristic}).
This causes the dramatically long solving time.
\begin{inextended}
Improvements are possible, such as caching a history of \Call{Decent}{} calls avoiding visiting same neighbourhoods again, but a first implementation showed high memory demand of the cache and high runtime due to many cache checks.
\end{inextended}
Concluding, \emph{using the genetic heuristic} (in its current version) \emph{is impractical with its significantly worse quality and solving time}.

\subsection{Scenario Variants}
\label{sec:scenario-varia}

This section investigates how application performance and resource distribution influence the average response times. 
The same setup as in the previous section was used but with different values for the compared factors focusing on the new investigation.

For the application performance, we consider short, medium, and long
request processing times corresponding to high, medium, and low service rates; $\hMu{=}1; 100; \SI{10.000}{\req\per\s}$.
The same total number of resources $\sum_fk_f{=}5|N|$ were geographically distributed in five different ways (factor \hS).
(a)~$\hS{=}\texttt{d5}$: $|N|$ available resources are assigned to the
$5$
nodes with largest degrees\footnotemark; 
\footnotetext{For all degree-based selections, the node ID was the tiebreaker.}
(b)~$\hS{=}\texttt{d}$: resources are distributed
across all nodes, weighted by degree (this was used in our previous work); 
(c)~$\hS{=}\texttt{d}^2$: node degrees are squared,  amplifying the effect of having more resources at better connected nodes.
(d)~$\hS{=}\texttt{c}$: As a baseline case, all available resources are placed at a single node having the lowest total latency to all other nodes,. This baseline case minimises the average queuing delay.
(e)~$\hS{=}x$: All nodes have $100$ available resources, eliminating
effects of capacity limits ($k_f$)  and acting as an upper quality
bound.  

The demand distribution follows only two mathematical distributions, $\hD{\in}\lbrace\text{N}_2, \text{Exp}\rbrace$.
The resource limit \hp is restricted to low and high facility utilisation, $\hp{=}\lceil 5a|N| \rceil $, $a{=}\numlist{0.51;  0.6;  0.75}$.
The same topologies are considered as in the previous section\ifextended{~(\Cref{tbl:topologies})}.
The resulting $720$ factor combinations each have $50$~realisations of demand distribution, resulting in \num{36000} configurations.
These configurations are solved by \optQPc with basepoint set $\hB{=}(6,4^i)$.

At first, the configurations are grouped into combinations of factors $(\hMu, \hp, \hD)$.
Within one group, the resource distribution factors $\hS$ are
compared. 
For all groups, configurations with $100$~resources everywhere ($\hS{=}x$) have, as expected, the shortest round trip times and response times. 
Configurations with resources at a single site ($\hS{=}c$) have, as
expected, the longest round trip times and the shortest queuing delays in all groups.
In addition, for some groups this factor results in the shortest response time showing that queuing delay drop compensates the longest round trip time.
Consequently, \emph{using resources at multiple sites does not always have a shorter response times than a single-site resource allocation}.
Which factor $\hS$ results in the second-shortest response time depends on the topology $\hG$ but was independent of $\hMu, \hp, \hD$.
\begin{notinextended}
This paper's extended version~\cite{Keller2015b} provides detailed plots
on the queuing delays and response times for these variants.
\end{notinextended}
\begin{inextended}
All plots in \Cref{fig:rt:scen} compares average round trip times, response times, and queuing delay (the different between response times and round trip times) as a function of resource limit $\hp$, service rate $\hMu$, demand distribution $\hD$ for the five groups of resource distributions $\hS$.
\begin{figure*}[tbp]
\centering
\begin{subfigure}{.48\textwidth}
  \centering
  \includegraphics[scale=0.9]{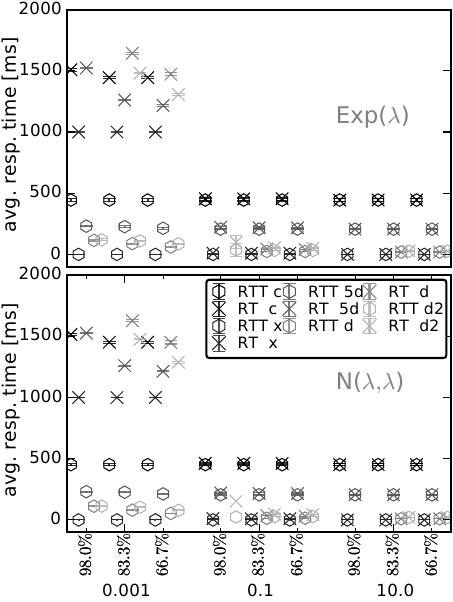}
  \caption{$\hG=\texttt{guil39}$ \label{fig:rt:scengiul39}}
\end{subfigure}
\begin{subfigure}{.48\textwidth}
  \centering
  \includegraphics[scale=0.9]{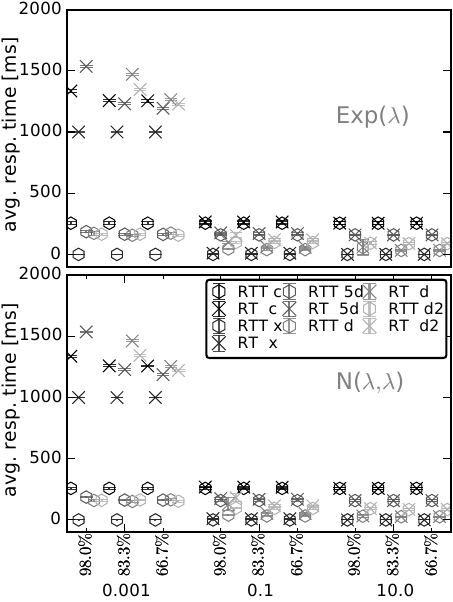}
  \caption{$\hG=\texttt{Columbus}$ \label{fig:rt:scencolumbus}}
\end{subfigure}

\begin{subfigure}{.48\textwidth}
  \centering
  \includegraphics[scale=0.9]{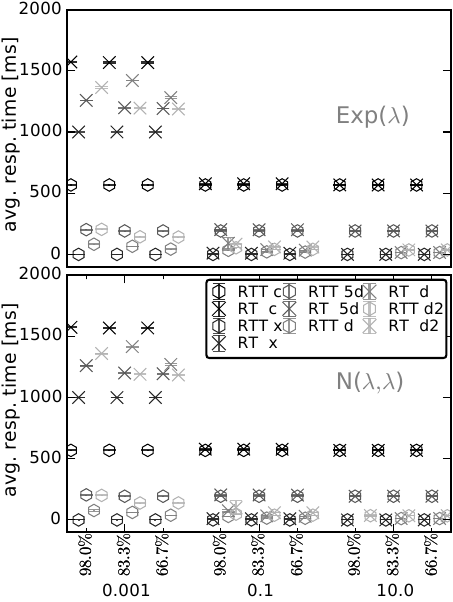}
  \caption{$\hG=\texttt{Colt}$ \label{fig:rt:scencolt}}
\end{subfigure}
\begin{subfigure}{.48\textwidth}
  \centering
  \includegraphics[scale=0.9]{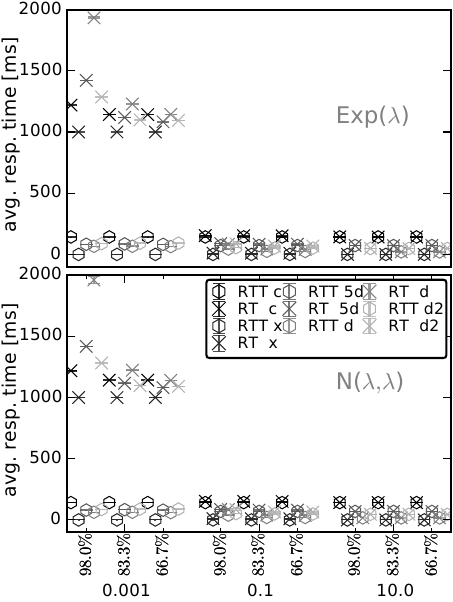}
  \caption{$\hG=\texttt{Cesnet20}$ \label{fig:rt:scencesnet20}}  
\end{subfigure}
\caption{Comparison of RT and RTT as a function of $\hMu$ and $\hp$ grouped by different~$\hS$.\label{fig:rt:scen}}
\end{figure*}

\end{inextended}

How much a 
distributed deployment reduces the response time 
compared to a single-site deployment, the quality of different
resource distributions $\hS{=}$ d, d2, 5d are compared against
$\hS{=}$c by computing quality factors. %
\begin{inextended}
\Cref{fig:rt:scenfactors} shows ECDFs for quality factors for each resource distribution.
\begin{figure}[tbp]
\centering
\includegraphics[scale=0.9]{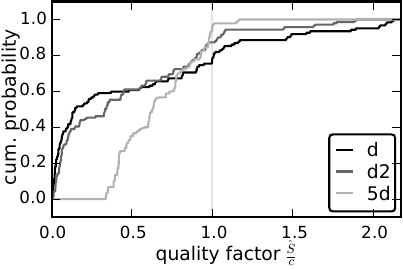}
\caption{Comparison of quality factors for three $\hS$ against $\hS{=}p$ \label{fig:rt:scenfactors}}  
\end{figure}
\end{inextended}
For these resource distributions, $78\%$, $87\%$, and $97\%$ of the
configurations yield better response times than $\hS{=}c$\ifextended{
  (the quality ratio being smaller than $1$)} and 
$61\%$, $61\%$, $35\%$ have half or shorter response times\ifextended{ (the quality ratios being smaller than $0.5$)}.
In conclusion, \emph{deploying application across multiple sites can
  (at least) halve response times}.

\section{Conclusion}
This paper investigated the problem of allocating resources at
multiple sites in order to minimise the user-perceived 
response time.  Such a distributed deployment  reduces response
time by half or more compared to a single-site deployment
(\Cref{sec:scenario-varia}).  
Five different formulations of optimisation problems were
presented, trading off quality against solving time. One of these
techniques -- thinned curves -- seems particularly attractive as its
solving times are 
vastly superior at only marginally reduced service
response times. This technique, however, is somewhat sensitive to an
improper choice of basepoints; the surface techniques are much more
robust against a small number of basepoints. 

\begin{inextended}
From the considered factors, the topology has -- as expected -- a
strong influence on the optimal solution and its quality. Also,
severely limited compute resources make the problem hard to solve well; the
common wisdom of queuing theory to provide ample spare capacity is
reinforced here in a distributed setting. Moreover, our results
indicate that jointly considering queuing delay and RTT is indeed
crucial when these two times are roughly of the same order of
magnitude -- otherwise, when one of the two times dominates, simpler
optimisation models suffice to obtain good solutions. In fact,  we
found scenarios where single-site deployments outperformed distributed
deployments. 
\end{inextended}

The presented formulation techniques are not limited to the paper's problem.
Beyond this paper, any optimisation problem having a univariate or
bivariate, non-linear cost function can be linearised by the presented
approaches. We extended known approaches to mixed-integer objective
functions which have not been treated in the literature so far. 

Our formulations are not restricted to convex/concave cost functions.
In particular, a newly presented surface linearisation based on
quadrilaterals instead of  commonly used triangles turned out to be a
promising alternative that is sometimes faster than the triangle-based
formulations and has the same solution quality (\Cref{sec:eval:comp:surfaceform}).

We introduced three greedy algorithms \Call{MaxWeight}{},
\Call{Alloc}{}, \Call{Dealloc}{} for allocation problems where $n$
tokens are placed in $m$~buckets so that the costs are minimised. 
If the cost function~$c(j)$ is decreasing in the number of buckets and convex, these algorithms are optimal.

\begin{inextended}
An advanced queuing system approximating a facility, data centre, or rack of compute nodes, (heterogeneous resources) could improve accuracy; but by how much is the question?

The following unproven thought could improve all approaches using the algorithm \Call{Alloc}{}: 
When calling \Call{Alloc}{} without initial allocation the result is equal to but most likely better as providing an initial allocation as proposed in this paper.
Even if the provided initial allocation is optimal for the used problem linearisation, the optimal allocation of \Call{Alloc}{} might performs better.
Similarly, calling \Call{Dealloc}{} can be replaced by calling \Call{Alloc}{} without initial allocation instead.
\end{inextended}

In summary, this paper not only improves service response times by
optimally allocating resources but also presents optimisation
techniques and greedy algorithms applicable beyond this scenario.

\section*{Acknowledgment}
This work was partially supported by the German Research Foundation~(DFG) within the
Collaborative Research Centre ``On-The-Fly Computing''~(SFB~901).

\bibliographystyle{abbrv}
\bibliography{mendexport2}

\vspace*{-10mm}
\vfill
 
\begin{IEEEbiography}[{\includegraphics[width=1in,height=1.25in,clip,keepaspectratio]{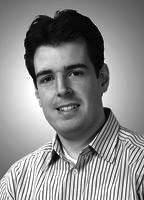}}]{Matthias Keller}
received his diploma degree in computer science from the University of Paderborn, Germany.
He is currently working as a research associate in the Computer Networks group in the University of Paderborn.
He focuses on adaptive resource allocation across 
wide area networks, designed a framework, and created a 
prototype testbed.
Previously, he worked at the Paderborn Centre for Parallel Computing as a research associate and as a software engineer in the industry.
\end{IEEEbiography}

\begin{IEEEbiography}[{\includegraphics[width=1in,height=1.25in,clip,keepaspectratio]{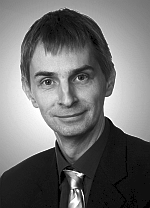}}]{Holger Karl}
received his PhD in 1999 from the Humboldt University Berlin; afterwards he joined the Technical University Berlin. Since 2004, he is Professor for Computer Networks at the University of Paderborn. He is also responsible for the Paderborn Centre for Parallel Computing and has been involved in various European and national research projects. His main research interests are wireless communication and architectures for the Future Internet.
\end{IEEEbiography}

\begin{inextended}
\afterpage{\clearpage}
\onecolumn
\section{Appendix}

\subsection{$\funcN_k(a)$ Derivative and Transformation }
\label{sec:apx-funcNdev}
This sections provide the transformation of first derivative $\funcN$ of $a$ with fix $k$ into a term using recursive $\funcV(a,k)$~\eqref{eq:funcv}.

\bgroup
\small
\begin{IEEEeqnarray}{ll}
    \frac{d}{da}\funcN_k(a) & =  \frac{{a}^{k}\,k\,\left( k+1\right) }{\left( k-a\right) \,\left( \left( \sum_{i=0}^{k-1}\frac{{a}^{i}}{i!}\right) \,\left( k-a\right) \,k!+{a}^{k}\,k\right) } +1
 +\frac{{a}^{k+1}\,k}{{\left( k-a\right) }^{2}\,\left( \left( \sum_{i=0}^{k-1}\frac{{a}^{i}}{i!}\right) \,\left( k-a\right) \,k!+{a}^{k}\,k\right) } 
 \notag \\
& -\frac{{a}^{k+1}\,k\,\left( \left( \sum_{i=0}^{k-1}\frac{{a}^{i-1}\,i}{i!}\right) \,\left( k-a\right) \,k!-\left( \sum_{i=0}^{k-1}\frac{{a}^{i}}{i!}\right) \,k!+{a}^{k-1}\,{k}^{2}\right) }{\left( k-a\right) \,{\left( \left( \sum_{i=0}^{k-1}\frac{{a}^{i}}{i!}\right) \,\left( k-a\right) \,k!+{a}^{k}\,k\right) }^{2}} \label{eq:derivN} \\
& = \frac{k\,\left( k+1\right)} {\left( k-a\right)\left( (k-a) \sum_{i=0}^{k-1} \frac{a^i\,k!}{i!\,a^k} + k \right) } + 1 
 + \frac{a\,k}
{{\left( k-a\right) }^{2}\left( (k-a) \sum_{i=0}^{k-1} \frac{a^i\,k!}{i!\,a^k} + k \right)} \notag \\
& - \frac{ a\,k \left( \sum_{i=0}^{k-1} \frac {a^{(i-k)}\,i\,(k-1)!}{i!} (k-a) \frac k a) + \sum_{i=0}^{k-1} \frac{a^i\,k!}{i!\,a^k} + \frac {k^2} a \right )}
{\left( k-a \right) \, \left( \sum_{i=0}^{k-1} \frac{a^i\,k!}{i!\,a^k} (k-a) + k \right)^{2}} \notag \\
& = \frac{k\,\left( k+1\right) }{{\left( k-a\right) }^{2}\,\mathrm{V}\left( a,k\right) +k\,\left( k-a\right) } +1
  + \frac{a\,k}{{\left( k-a\right) }^{3}\,\mathrm{V}\left( a,k\right) +{\left( k-a\right) }^{2}\,k}\notag \\
& - \frac{ \mathrm{V}\left( a,k-1\right) \,{k}^{2}\,\left( k-a\right) -\mathrm{V}\left( a,k\right) \,a\,k+{k}^{3} }{\left( k-a\right) \,{\left( \mathrm{V}\left( a,k\right) \,\left( k-a\right) +k\right) }^{2}} \label{eq:derivNviaV}
\end{IEEEeqnarray}
\begin{IEEEeqnarray}{ll}
V(a,k{-}1) &= \sum_{i=0}^{k-2}\frac{( k{-}1)!}{i!} {a}^{i-k+1} =
  \sum_{i=1}^{k-1}\frac{( k{-}1)!}{i!} \, i \, {a}^{i-k} 
\end{IEEEeqnarray} \notag
\egroup

\end{inextended}

\end{document}